

\documentclass[sigconf]{aamas}


\usepackage{balance} 
\usepackage{tcolorbox}


\setcopyright{ifaamas}
\acmConference[AAMAS '25]{Proc.\@ of the 24th International Conference
on Autonomous Agents and Multiagent Systems (AAMAS 2025)}{May 19 -- 23, 2025}
{Detroit, Michigan, USA}{A.~El~Fallah~Seghrouchni, Y.~Vorobeychik, S.~Das, A.~Nowe (eds.)}
\copyrightyear{2025}
\acmYear{2025}
\acmDOI{}
\acmPrice{}
\acmISBN{}



\acmSubmissionID{1010}


\usepackage{xcolor, dsfont}
\usepackage{amsmath}
\usepackage{algorithm}
\usepackage{algorithmic}

\usepackage{epsfig, amsmath,epsf,scalefnt,epstopdf,setspace,fancyhdr,enumerate,mathrsfs, array}
\usepackage{xcolor}

\allowdisplaybreaks[4]

\usepackage{todonotes}
\usepackage{multirow}
\usepackage{booktabs}
\usepackage{tabularx}
\usepackage[htt]{hyphenat}

\usepackage{graphicx}     

\usepackage{tikz}
\usetikzlibrary{automata,arrows,positioning,calc}

\newtheorem{thm}{Theorem}

\newtheorem{prop}{Proposition}
\newtheorem{cor}{Corollary}
\newtheorem{defn}{Definition}

\usepackage{amsfonts,comment,bbm}

\newcommand{\cF}{\mathcal{F}}

\newcommand{\cA}{\mathcal{A}}
\newcommand{\cS}{\mathcal{S}}
\newcommand{\ust}{^{\star}}

\newcommand{\cI}{\mathcal{I}}

\newtheorem{lemma}{Lemma}

\newtheorem{remark}{Remark}

\newtheorem{example}{Example}

\def\cA{{\mathcal{A}}}   
 \def\cF{{\mathcal{F}}}  
\def\cI{{\mathcal{I}}}   
   
  \def\cS{{\mathcal{S}}}


   \def\bs{{\mathbf{s}}}


\title[AAMAS-2025 Formatting Instructions]{Finite-Horizon Single-Pull Restless Bandits: An Efficient Index Policy For Scarce Resource Allocation}


\author{Guojun Xiong$^1$, Haichuan Wang$^1$, Yuqi Pan$^1$, Saptarshi Mandal$^2$, Sanket Shah$^1$, Niclas Boehmer$^3$, Milind Tambe$^1$  }
\affiliation{
  \institution{$^1$Harvard University, $^2$University of Illinois, Urbana- Champaign, $^3$Hasso Plattner Institute}
  \city{$^1$Cambridge, $^2$Champaign, $^3$Potsdam}
  \country{$^{1,2}$United States, $^3$Germany}}
\email{{gjxiong,
 haichuan_wang,  yuqipan,sanketshah, tambe}@g.harvard.edu, smandal4@illinois.edu,niclas.boehmer@hpi.de}






\begin{abstract}
Restless multi-armed bandits (RMABs) have been highly successful in optimizing sequential resource allocation across many domains. However, in many practical settings with highly scarce resources, where each agent can only receive at most one resource, such as healthcare intervention programs, the standard RMAB framework falls short. To tackle such scenarios, we introduce Finite-Horizon Single-Pull RMABs (SPRMABs), a novel variant in which each arm can only be pulled once. This single-pull constraint introduces additional complexity, rendering many existing RMAB solutions suboptimal or ineffective. 
To address this shortcoming, we propose using \textit{dummy states} that expand the system and enforce the one-pull constraint. We then design a  lightweight index policy for this expanded system.  For the first time, we demonstrate that our index policy achieves a sub-linearly decaying average optimality gap of
$\tilde{\mathcal{O}}\left(\frac{1}{\rho^{1/2}}\right)$ for a finite number of arms, where
$\rho$ is the scaling factor for each arm cluster. Extensive simulations validate the proposed method, showing robust performance across various domains compared to existing benchmarks.
\end{abstract}



\keywords{Restless multi-armed bandits, scare resource, single-pull constraint, index policy, optimality analysis}



\newcommand{\BibTeX}{\rm B\kern-.05em{\sc i\kern-.025em b}\kern-.08em\TeX}


\begin{document}


\pagestyle{fancy}
\fancyhead{}


\maketitle

\section{Introduction}
The restless multi-armed bandits (RMAB) problem \citep{whittle1988restless} is a time slotted game between a decision maker (DM) and the environment.  In the standard RMAB model, each ``restless'' arm is described by a Markov decision process (MDP) \citep{puterman1994markov}, and evolves stochastically according to two different transition functions, depending on whether the arm is activated or not. Scalar rewards are generated with each transition. The goal of the DM is to maximize the total expected reward under an instantaneous constraint that at most $K$ out of $N$ arms can be activated at any decision epoch.  RMABs have been widely used to model a variety of real-world applications such as problems around congestion control \cite{avrachenkov2017controlling}, job scheduling \cite{yu2018deadline},  wireless communication \cite{bagheri2015restless}, healthcare \cite{mate2021risk,killian2021beyond}, queueing systems \cite{larrnaaga2016dynamic}, and cloud computing \cite{xiong2022reinforcementcache}. One key reason for the popularity of RMABs is their ability to optimize sequential allocation of limited resources to a population of agents in uncertain environments \cite{mate2022field}.


However, in many real-world scenarios, additional constraints are placed on the allocation. In this paper, we propose and study the new problem of sequentially allocating resources when each agent can only receive a resource in at most one timestep, i.e., we focus on the RMAB problem where no arm can be pulled repeatedly. This constraint is, for instance, prevalent in real-world domains where resources are extremely scarce and there are many more agents than resources, occurring for instance in healthcare, conservation, and machine maintenance. Even in cases where the number of resources and agents are of the same magnitude, organizers might impose single-pull constraints for fairness reasons to ensure an equal treatment of all agents. Lastly, there are also allocation scenarios where an agent only benefits from the first resource assigned to them, for instance, when distributing single-dose vaccines.

Concrete examples where the single-pull constraint is imposed in practice arise in public health, where RMABs are (or can be) used to optimize the allocation of health intervention resources~\cite{lee2019optimal,verma2023expanding,boehmer2024optimizingvitalsignmonitoring}. We present more detailed examples of deployed and emerging applications from healthcare and other domains with the presence of such a single intervention constraint in Section \ref{Motivating}. These practical challenges necessitate the development of a new model capable of addressing the unique constraints posed by single-pull scenarios, ensuring efficient allocation of limited resources.


Given the new and urgent requirement of a single pull per arm in many practical domains, we introduce the Finite-Horizon Single-Pull Restless Multi-Armed Bandits (\texttt{SPRMABs}), a novel variant of the RMAB framework where each arm can be pulled at most once.  As widely known, the complexity of conventional RMAB lies in the challenge of finding an optimal control strategy to maximize the expected total reward, a problem that is typically intractable \cite{papadimitriou1999complexity}. As a result, existing approaches have largely focused on designing efficient heuristic (and at times asymptotically optimal) index-based policies, such as those developed for offline RMABs \citep{whittle1988restless, larranaga2014index, bagheri2015restless, verloop2016asymptotically, zhang2021restless} and reinforcement learning (RL) algorithms for online RMABs \citep{tekin2011adaptive, cohen2014restless, fu2019towards, jung2019regret,  nakhleh2021neurwin, xiong2022Nips, avrachenkov2022whittle, xiong2023finite, wang2024online}. However, the introduction of the single-pull constraint in \texttt{SPRMABs} renders these traditional methods either suboptimal or inapplicable.

Tailoring existing methods, such as the Whittle index policy \cite{whittle1988restless} and fluid linear programming (LP)-based policies \cite{verloop2016asymptotically, zhang2021restless, brown2023fluid, gast2023linear}, to the novel \texttt{SPRMABs} presents significant challenges. Specifically, the Whittle index policy \cite{whittle1988restless} is defined using Lagrange multipliers for activation budget constraints. Introducing the single-pull constraint disrupts this framework, as the Lagrangian formulation becomes ill-defined, causing the method to return highly suboptimal solutions.
Similarly, for fluid LP-based methods \cite{verloop2016asymptotically, zhang2021restless, brown2023fluid, gast2023linear}, enforcing the single-pull constraint introduces a new nonlinear constraint, which exponentially increases the complexity as the time horizon and the number of arms grows. The question we tackle in this paper is the following:

\vspace{-0.13in}
\begin{tcolorbox}[colback=white!5!white,colframe=white!75!white]
\textit{Is it possible to design a light-weight asymptotically optimal index policy for \texttt{SPRMABs}?}
\end{tcolorbox}
\vspace{-0.15in}
To tackle this challenge, we utilize \textit{dummy states} to duplicate the entire system, ensuring that once an arm is activated, it transitions exclusively between these dummy states. The transitions within the dummy states mirror those of the normal states when no action is taken (i.e., action 0). \textit{Building on this expanded system, we design a lightweight index policy specifically tailored for SPRMABs}, and we demonstrate that our proposed index policy achieves a linear decaying rate in the average optimality gap. Our main contributions can be summarized as follows:

$\bullet$ \textbf{Lightweight Index Policy Design:} We leverage the concept of expanding the system through dummy states and develop a lightweight index policy, called single-pull index (\texttt{SPI}) policy, which addresses two challenges that conventional index policies cannot handle. 
{First, in real-world applications, pulling an arm doesn't always guarantee better outcomes, meaning the full budget may not need to be used. Existing index policies often exhaust the budget on the highest indices, leading to suboptimal results. Second, in \texttt{SPRMABs}, each arm can only be pulled once, making activation timing crucial. An arm with a high index now may yield a better reward if pulled later, which current algorithms fail to handle. Dummy states allow deferring decisions without affecting future rewards, conserving resources and tackling both challenges effectively.}

$\bullet$ \textbf{Optimality Gap:} For the first time, we demonstrate that our proposed index policy achieves a sub-linearly decaying rate of the average optimality gap for a finite number of arms, characterized by the bound 
$\tilde{\mathcal{O}}(\frac{1}{\rho^{1/2}}+\frac{1}{\rho^{3/2}})$, where 
$\rho$ denotes the scaling factor for each arm cluster.

$\bullet$ \textbf{Empirical Simulations:} We conduct extensive simulations to validate the effectiveness of the proposed method, benchmarking it against existing strategies. The results consistently demonstrate robust performance across a variety of domain settings, underscoring the practicality and versatility of our index policy in addressing \texttt{SPRMABs}. This advancement not only enhances the applicability of \texttt{SPRMABs} in equitable resource allocation but also lays a strong foundation for future research in constrained bandit settings.

\section{Motivating Domains and Examples}
\label{Motivating}
The single-pull constraint in RMABs is motivated by multiple real-world domains. We begin by describing examples from public health domains with limited resources \cite{lee2019optimal,ayer2019prioritizing}.  One concrete deployed example RMABs used for a maternal mHealth (mobile health) program in India \cite{mate2022field, verma2023expanding}. This deployment supports an mHealth program of ARMMAN (armman.org), an India-based non-profit that spreads preventative care awareness to pregnant women and new mothers through an automated call service. To reduce dropoffs from the mHealth program, ARMMAN employs health workers to provide live service calls to beneficiaries; however, ARMMAN is faced with a resource allocation challenge because any one time, there are 200K beneficiaries (mothers) enrolled in the program but they have enough staff to only do 1000 live source calls per week. As a result, RMABs are deployed to optimize allocation of their limited live service calls\cite{verma2023expanding}, and given the scale of the program, a beneficiary received a maximum of one service call. That is,  each RMAB arm represents a mother, and once an arm is pulled, i.e., the mother receives a service call, she does not receive a service call again. 

Similarly, in maternal health programs in Uganda \cite{boehmer2024optimizingvitalsignmonitoring}, RMABs are proposed to be used to allocate scarce wireless vital sign monitors to mothers in maternity wards, where each mother may receive such a monitor only once during her stay. A similar scenario occurs in support programs which can only support a limited number of beneficiaries every week and beneficiaries can only participate in the program once. One such example is  malnutrition prevention\cite{langendorf2014preventing}, where a child may be enrolled in a malnutrition program only once. These practical challenges necessitate the development of a new model capable of addressing the unique constraints posed by single-pull scenarios, ensuring an efficient 
allocation of limited resources.

\section{System Model and Problem Formulation}
Consider a finite-horizon RMAB problem with $N$ arms. Each arm $n$ is associated with a specific Unichain Markov decision process (MDP) $(\cS, \cA, P_n, r_n, \bold{s}_1, T)$, where $\cS$ is the finite state space and $\cA:=\{0,1\}$ denotes the binary action set. Using the standard terminology from the RMAB literature, we call an arm \textit{passive} when action $a=0$ is applied to it, and \textit{active} otherwise.  $P_n:\cS\times\cA\times\cS\mapsto\mathbb{R}$ is the transition kernel and $r_n: \cS\times\cA\mapsto \mathbb{R}$ is the reward function.
The total number of activated arms at each time $t$ is constrained by $K$, which we call the \textit{activation budget.} The initial state is chosen according to the initial distribution $\bold{s}_1$ and $T<\infty$ is the horizon.

At time $t\in[T]$, each arm $n$ is at a specific state $s_n(t)\in \cS$ and evolves to $s_n(t+1)$ independently as a controlled Markov process with the controlled transition probabilities $P_n(s_n(t),a_n(t), s_n(t+1))$ when action $a_n(t)$ is taken.
The immediate reward earned from activating arm $n$ at time $t$ is denoted by
$r_n(t) : = r_n(s_n(t), a_n(t))$.   Denote the total reward earned at time $t$ by $R(t)$, i.e., $R(t):= \sum_n r_n(t)$.
Motivated by the healthcare implementations where each arm (i.e., patient) can only be pulled for once due to resource limitation,
now let us consider the scenario where each arm can only be pulled once, and the duration of activation is also one. This is equivalent to the constraint in the following expression
\begin{align}\label{eq:one-pull-cons}
 \texttt{Single-pull constraint}:~~~~~~~   \sum_{t=1}^T a_n(t)\leq 1, \forall n.
\end{align}

Let $\cF_t$ denote the operational history until $t$, i.e., the {$\sigma$-algebra}  generated by the random variables $ \{s_n(\ell):n\in [N],\ell \in [t] \}, \{a_n(\ell):n\in [N],\ell \in [t-1] \}$. Our goal is to derive a policy $\pi: \cF_t \mapsto \cA^{N}$ that makes decisions regarding which set of arms are made active at each time $t\in[T]$ so as to maximize the expected value of the cumulative rewards subject to the activation budget and the one-pull constraint in \eqref{eq:one-pull-cons}, i.e.,
\begin{align}\label{eq:One-Pull-Opt}\nonumber
\texttt{SPRMAB}:~~&\max_{\pi}\mathbb{E}_\pi \left(\sum_{n=1}^{N} \sum_{t=1}^{T} r_n(t)\right)\\
&\quad\mbox{s.t.}\sum_{n=1}^{N}\! a_n(t)\leq K,\forall t\in [T], ~\sum_{t=1}^T a_n(t)\leq 1, \forall n.
\end{align}
where the subscript indicates that the expectation is taken with respect to the measure induced by the policy $\pi.$  We refer to the problem \eqref{eq:One-Pull-Opt} as the ``original problem'', which
suffers from the ``curse of dimensionality'', and hence is computationally intractable. 
We overcome this difficulty by developing a computationally feasible and provably optimal index-based policy.

\subsection{Existing Index Policies and Failure Examples}

The challenge comes from the ``hard" constraints in \eqref{eq:One-Pull-Opt}, where the first \texttt{budget constraint}  must be satisfied at every time step, and the second \texttt{single-pull constraint}  must be satisfied firmly for all arms.
Existing index policy approaches \cite{whittle1988restless,verloop2016asymptotically,zhang2021restless} for conventional RMAB problems without the \texttt{single-pull} constraint
design indices
by relaxing the ``hard'' \texttt{activation-budget constraint}  $\sum_{n=1}^{N}\! a_n(t)\leq K,\forall t\in [T]$ to the ``relaxed'' constraints, i.e., the activation cost at time $t\in[T]$ is limited by $K$ in expectation, which is
\begin{align}\label{eq:relaxed_budget_cons}
\texttt{Re-budget constraint:}~~~\mathbb{E}_{\pi}\left\{\sum_{n=1}^{N} a_n(t) \right\}\leq K.
\end{align}
In the following, we present two typical index polices, one is the Whittle index policy \cite{whittle1988restless}, and the other is the LP-based index policy \cite{verloop2016asymptotically,zhang2021restless,xiong2021reinforcement}.

\textbf{Whittle Index Policy.}  Whittle index \cite{whittle1988restless} is designed upon the infinite-horizon average-reward (IHAR) RMAB settings through decomposition.  Specifically, Whittle relies on the  \texttt{Relaxed budget constraint}  in~(\ref{eq:relaxed_budget_cons}) and obtains a unconstrained problem  for IHAR settings:
\begin{align*}
  \texttt{IHAR-RMAB:}~  \max_{\pi\in\Pi} \liminf_{T\rightarrow\infty}\frac{1}{T}\mathbb{E}_\pi\sum_{t=1}^T\sum_{n=1}^{N} \{r_n(t)+\lambda(1- a_n(t))\},
\end{align*} {where $\lambda$ is the Lagrangian multiplier associated with the constraint.}  The key observation of Whittle is that this problem can be decomposed and its solution is obtained by combining solutions of $N$ independent problems via solving the associated dynamic programming (DP) \cite{xiong2023finite}: $V_n(s)=\max_{a\in\{0,1\}}Q_n(s,a), \forall n\in\mathcal{N},$ where
\begin{align}\label{eq:Q_value}
Q_n(s,a)+{\beta}&=a\Big(r_n(s,a)+\sum_{s^\prime}p_n(s^\prime |s,1)V_n(s^\prime)\!\Big)\nonumber\\
&+(1-a)\Big(r_n(s,a)+\lambda+\sum_{s^\prime}p_n(s^\prime |s,0)V_n(s^\prime)\Big),
\end{align}
\vspace{-0.01in}
where $\beta$ is unique and equals to the maximal long-term average reward of the unichain MDP, and $V_n(s)$ is unique up to an additive constant, both of which depend on the Lagrangian multiplier $\lambda.$ The optimal decision $a^*$ in state $s$ then is the one which maximizes the right hand side of the above DP. The Whittle index associated with state $s$ is defined as the value $\lambda_n^*(s)\in\mathbb{R}$ such that actions  $0$ and $1$ are equally favorable in state $s$ for arm $n$ \cite{avrachenkov2022whittle,fu2019towards}, satisfying
\begin{align}\label{eq:Whittle_index}
\lambda_n^*(s) &:= r_n(s,1)+\sum_{s^\prime}p_n(s^\prime |s,1)V_n(s^\prime)\nonumber\\
&\qquad\qquad\qquad-r_n(s,0)-\sum_{s^\prime}p_n(s^\prime |s,0)V_n(s^\prime).
\end{align}
Whittle index policy then activates $K$ arms with the largest Whittle indices at each time slot $t$.

\textbf{LP-based Index Policy.}
With the \texttt{Relaxed budget constraint} in \eqref{eq:relaxed_budget_cons}, we can transfer the conventional RMAB problem into an equivalent LP \cite{altman2021constrained} by leveraging the definition of occupancy measure (OM). In particular, the OM $\mu$ of a policy $\pi$ in a finite-horizon MDP is defined as the expected number of visits to a state-action pair $(s,a)$ at each time $t$, i.e.,
$$
\mu= \left\{\mu_n(s,a;t)=\mathbb{P}(s_n(t)=s, a_n(t)=a): \forall n, t|0\leq \mu_n(s,a;t)\leq 1\right\},$$
which is a probability measure,  satisfing $\sum_{s,a}\mu_n(s,a,t)=1$, $
\forall t\in[T]$. Hence, the associated LP is expressed as
\begin{align}
\max_{\mu}&\sum_{n=1}^{N}\sum_{t=1}^T\sum_{(s,a)} \mu_n(s,a;t){r}_n(s,a)\label{eq:obj_rel_lp} \displaybreak[0]\\
\hspace{-0.1cm}\mbox{ s.t. }& { \sum_{n=1}^{N}\sum_{s} \mu_n(s,1;t)\leq K}, ~\forall t,{//activation ~constraint }
\label{eq:con_rel_lp}\displaybreak[1]\\
& {\sum_{a}} \mu_n(s,a;t)\!=\!\!\!\!\sum_{(s^\prime,a^\prime)}\!\!\!\mu_n(s^\prime, a^\prime; t\!-\!1)P_n(s^\prime, a^\prime,s),\forall n, s,\nonumber\\
&\qquad\qquad\qquad\qquad\qquad\qquad{//fluid ~equilibrium }\label{eq:pmc_rel_lp1}\displaybreak[3]\\
&~~\sum_{a}\mu_n(s,a;1)={\bold{s}_1(s)},~\forall s, n,{//initial ~distribution }
\label{eq:pmc_rel_lp2}
\end{align}
where \eqref{eq:con_rel_lp} is a restatement of the \texttt{budget constraint} in \eqref{eq:One-Pull-Opt} for $\forall t\in[T]$, which indicates the activation budget;
\eqref{eq:pmc_rel_lp1} represents the transition of the occupancy measure from time  $t-1$ to time $t$, $\forall n\in [N]$ and $\forall t\in[T]$; and \eqref{eq:pmc_rel_lp2} indicates the initial condition for occupancy measure at time $1$, $\forall s\in\cS$.
Denote the solution to the above LP as $\mu\ust = \left\{\mu\ust_{n}(s,a;t):n\in [N], t\in [T]  \right\}$.
A simple index-based policy according to the optimal solution $\mu^\star$ can be designed by dividing the arms at each time slot $t$ into three categories:
\begin{enumerate}
    \item High-priority states: $\mu_n^\star(s,0;t)=0$. (\textit{Pull arms under those states.} )
    \item Medium-priority states: $\mu_n^\star(s,1;t)>0$, $\mu_n^\star(s,0;t)>0$.(\textit{Pull arms under those states when remaining budget is available.})
    \item Low-priority states:
    $\mu_n^\star(s,1;t)=0$. (\textit{Do not pull arms under those states.})
\end{enumerate}

In \texttt{SPRMAB} as shown in \eqref{eq:One-Pull-Opt} where each arm can be activated at most once, the standard Whittle and LP-based index policies may become suboptimal. This is because the these index policies are designed under the assumption that arms can be activated multiple times, and it may not adequately account for the urgency of activating certain arms in a single-pull setting. Below, we present a rigorous example demonstrating how the Whittle and LP-based index policies can fail under these constraints (see Figure \ref{fig:example}).







\begin{figure}[h!]
    \centering
        \resizebox{0.39\textwidth}{!}{\begin{tikzpicture}[->, >=stealth', shorten >=1pt, auto, thick, node distance=3cm]
            \tikzstyle{state} = [draw, circle, minimum size=2cm, text centered]

            \node[state, fill=red!50] (A) {Low};
            \node[state, fill=green!30] (B) [right of=A] {Medium};
            \node[state, fill=yellow!50] (C) [right of=B] {High};

            \draw[->] (A) edge[loop left] node {$P(L|L, 0)$} (A);


            \draw[->] (B) edge[bend left] node[below] {$P(L|M, 0)$} (A);


            \draw[->] (C) edge[bend right] node[above] {$P(M|H, 0)$} (B);
        \end{tikzpicture}}

    \hfill

      \resizebox{0.48\textwidth}{!}{\begin{tikzpicture}[->, >=stealth', shorten >=1pt, auto, thick, node distance=3cm]
            \tikzstyle{state} = [draw, circle, minimum size=2cm, text centered]

            \node[state, fill=red!50] (A) {Low};
            \node[state, fill=green!30] (B) [right of=A] {Medium};
            \node[state, fill=yellow!50] (C) [right of=B] {High};

            \draw[->] (A) edge[loop left] node {$P(L|L, 1)$} (A);

            \draw[->] (C) edge[loop right] node {$P(H|H, 1)$} (C);

            \draw[->] (A) edge[bend left] node[above] {$P(M|L, 1)$} (B);
            \draw[->] (B) edge[bend left] node[below] {$P(L|M, 1)$} (A);

            \draw[->] (B) edge[bend left] node[above] {$P(H|M, 1)$} (C);
            \draw[->] (C) edge[bend left] node[below] {$P(M|H, 1)$} (B);
        \end{tikzpicture}}

    \caption{General transition kernels with $a=0$ in above and $a=1$ in below for a patient in CPAP example.}
    \label{fig:example}
\end{figure}
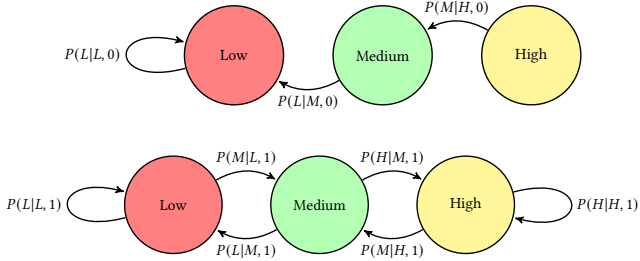

\begin{example}\label{example1}
\textbf{Continuous Positive Airway Pressure Therapy (CPAP).}
The CPAP \citep{herlihy2023planning,li2022towards, wang2024online} is a highly effective treatment when it is used
consistently during sleeping for adults with obstructive
sleep apnea. Since non-adherence to CPAP in patients hinders the effectiveness, we adapt the Markov model of CPAP
adherence behavior to a three-state system with the clinical adherence criteria. To elaborate, three distinct states are defined to characterize adherence levels: Low (L), Medium (M), and High (H) as shown in Figure \ref{fig:example}. Generally speaking, when action $a=0$ is taken, i.e., no intervention, the patient has a probability of $1$ to move from a higher adherence level to a lower adherence level. While intervention is available, a patient can either transit to a lower adherence lever or a higher adherence level with certain probabilities. In standard CPAP, the reward is set as the $1$ for state ``low adherence", $2$ for state ``medium adherence", and $3$ for state ``high adherence".
\end{example}

\begin{prop}\label{prop:indexability}
The MDP for each patient defined in Example \ref{example1} is indexable.
\end{prop}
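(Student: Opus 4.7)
The plan is to verify indexability of the subsidized single-arm MDP defined by equation~\eqref{eq:Q_value} for the CPAP instance of Example~\ref{example1}. By the standard definition, the arm is indexable if, letting $\mathcal{D}_n(\lambda):=\{s\in\mathcal{S}: a=0 \text{ is optimal at subsidy } \lambda\}$, the map $\lambda\mapsto\mathcal{D}_n(\lambda)$ is monotonically non-decreasing in $\lambda$, empty as $\lambda\to-\infty$, and equal to $\{L,M,H\}$ as $\lambda\to+\infty$. Since the state space has only three points, it suffices to exhibit an ordering $\lambda_n^\star(s_1)\le\lambda_n^\star(s_2)\le\lambda_n^\star(s_3)$ of Whittle indices consistent with passive optimality.

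First, I would exploit the deterministic passive dynamics ($H\to M\to L$, with $L$ absorbing) shown in Figure~\ref{fig:example} to simplify the Bellman equation~\eqref{eq:Q_value}. Under the always-passive policy, every trajectory is absorbed at $L$, so the long-run average reward equals $r_n(L)=1$ regardless of the starting state; this pins down $\beta(\lambda)$ in the regime where passivity dominates. A symmetric computation with the stationary distribution of the always-active chain over $\{L,M,H\}$ pins down the opposite regime and immediately gives the boundary conditions $\mathcal{D}_n(\lambda)=\emptyset$ for $\lambda\ll 0$ and $\mathcal{D}_n(\lambda)=\{L,M,H\}$ for $\lambda\gg 0$.

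Second, I would establish monotonicity of $\mathcal{D}_n(\lambda)$ through the advantage function
\begin{equation*}
A_n(s;\lambda):= Q_n(s,0;\lambda)-Q_n(s,1;\lambda)= r_n(s,0)-r_n(s,1)+\lambda+\sum_{s'}\bigl[p_n(s'|s,0)-p_n(s'|s,1)\bigr]V_n(s';\lambda).
\end{equation*}
Because $V_n(\cdot;\lambda)$ is a pointwise maximum, over the finitely many deterministic stationary policies of this Unichain MDP, of objectives that are affine in $\lambda$, it is piecewise affine and convex in $\lambda$, with (sub)gradient in $[0,1]$ equal to the stationary passive frequency of the prevailing optimal policy. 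The signed vector $p_n(\cdot|s,0)-p_n(\cdot|s,1)$ has total mass zero, so its pairing with $\partial V_n(\cdot;\lambda)/\partial\lambda$ is bounded in absolute value by the total-variation distance of the two transition rows, which is at most $1$. Hence $\partial A_n(s;\lambda)/\partial\lambda\ge 0$ wherever the derivative exists, so $\{s: A_n(s;\lambda)\ge 0\}=\mathcal{D}_n(\lambda)$ grows with $\lambda$.

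The main obstacle is handling the finitely many kinks of $V_n(\cdot;\lambda)$ at which the optimal policy switches, where pointwise differentiation is not directly justified. The cleanest route is a convex-duality argument: for any $\lambda_1<\lambda_2$ with corresponding optimal policies $\pi_{\lambda_1},\pi_{\lambda_2}$, plugging each policy into the objective at the other subsidy gives the two-sided bound $V_n(s';\lambda_2)-V_n(s';\lambda_1)\in\bigl[(\lambda_2-\lambda_1)\,\chi_{s',\pi_{\lambda_1}},\,(\lambda_2-\lambda_1)\,\chi_{s',\pi_{\lambda_2}}\bigr]\subseteq[0,\lambda_2-\lambda_1]$, where $\chi_{s',\pi}\in[0,1]$ is the stationary passive visit frequency. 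Combining this with the zero-mean signed-measure bound yields $A_n(s;\lambda_2)-A_n(s;\lambda_1)\ge(\lambda_2-\lambda_1)(1-\mathrm{TV})\ge 0$ for every $s$, which together with the boundary limits in the first step proves indexability.
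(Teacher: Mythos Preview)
Your plan has a genuine gap in the second-step convex-duality bound. The inequality you write,
\[
V_n(s';\lambda_2)-V_n(s';\lambda_1)\in\bigl[(\lambda_2-\lambda_1)\,\chi_{s',\pi_{\lambda_1}},\,(\lambda_2-\lambda_1)\,\chi_{s',\pi_{\lambda_2}}\bigr]\subseteq[0,\lambda_2-\lambda_1],
\]
is the envelope statement for the \emph{gain} $\beta(\lambda)$ of the unichain average-reward problem~\eqref{eq:Q_value}, not for the \emph{bias} $V_n(s';\lambda)$. In a unichain the stationary passive frequency $\chi_{\pi}$ does not depend on the starting state $s'$, so your own bound already says the interval is a single point independent of $s'$; that is precisely $\beta'(\lambda)$. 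The bias derivative $\partial V_n(s';\lambda)/\partial\lambda$ solves a Poisson equation under the optimal chain with zero-mean right-hand side $\mathbb{1}\{a^\star(s')=0\}-\beta'(\lambda)$ and can be negative and larger than one in absolute value. Hence the pairing of $p_n(\cdot|s,0)-p_n(\cdot|s,1)$ with $\partial V_n/\partial\lambda$ is not controlled by $\mathrm{TV}\le 1$, and the conclusion $A_n(s;\lambda_2)-A_n(s;\lambda_1)\ge(\lambda_2-\lambda_1)(1-\mathrm{TV})\ge 0$ does not follow. A quick sanity check confirms the problem: nothing in your chain of inequalities uses the CPAP structure of Example~\ref{example1} beyond $\mathrm{TV}\le 1$, which holds for any transition kernel, so the argument would ``prove'' every restless arm indexable---contradicting standard counterexamples.

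The paper proceeds differently and does use the birth--death structure. It first argues that the $\lambda$-subsidized single-arm problem admits a threshold-type optimal policy: because the passive dynamics are deterministic downward ($H\to M\to L$ with $L$ absorbing), every state strictly below the largest passive state is transient under any stationary policy, so only the threshold matters for the stationary reward. With threshold optimality in hand for a birth--death chain, monotonicity of the passive set in $\lambda$ (i.e., indexability) follows from known results for that class, e.g.\ \cite{larranaga2014index,larrnaaga2016dynamic}. If you want to salvage your advantage-function route, you would have to inject exactly this structure---for instance, show that $V_n(\cdot;\lambda)$ is monotone in the state ordering $L<M<H$ for every $\lambda$ and combine that with the stochastic ordering $p_n(\cdot|s,1)\succeq p_n(\cdot|s,0)$ implicit in Figure~\ref{fig:example}---rather than rely on a universal $1-\mathrm{TV}$ bound.
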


Proposition \ref{prop:indexability} indicates that the Whittle index can be employed for the constructed CPAP problem in Example \ref{example1}. To verify that the Whittle index policy \cite{whittle1988restless} and LP-based policy \cite{xiong2021reinforcement,zhang2021restless} fail in this example, we construct the following setting. We randomly generate 20 different arms and each
arm is duplicated 10 times, whose transition probability matrices are generated randomly.  The budget is set to $K = 10$.  The objective is to maximize the total adherence level in a finite horizon $T=10$.
More importantly, each arm can only be pulled at most once.

Figure \ref{fig:failure example} highlights the performance limitations of existing policy strategies—specifically the LP-based method and the Whittle index policy—when \texttt{single-pull constraint} presents. It shows the normalized rewards where the optimal policy\footnote{Though we usually do not know the performance achieved by optimal policy, we can leverage the optimal value achieved for the LP in \eqref{eq:Obj_LP_final} to serve as the optimal performance, as it is always an upper bound of the optimal performance. This  will be explained in detail in Section \ref{Section4}.}, used as a benchmark, achieves a score of 1.0. The LP-based policy attains 0.76, and the Whittle index policy only 0.46 of the optimal performance. These results underline the ineffectiveness of both the LP-based method and the Whittle index policy in adequately handling the \texttt{single-pull constraint}, as neither approach reaches the efficiency of the optimal policy, particularly with the Whittle index-based approach performing less than half as well. This comparison suggests that these methods require modifications or alternative strategies to improve their adaptability and effectiveness under the strict limitations imposed by the \texttt{single-pull constraint}.

\begin{figure}[htbp]
\centering
\includegraphics[width=0.4\textwidth]{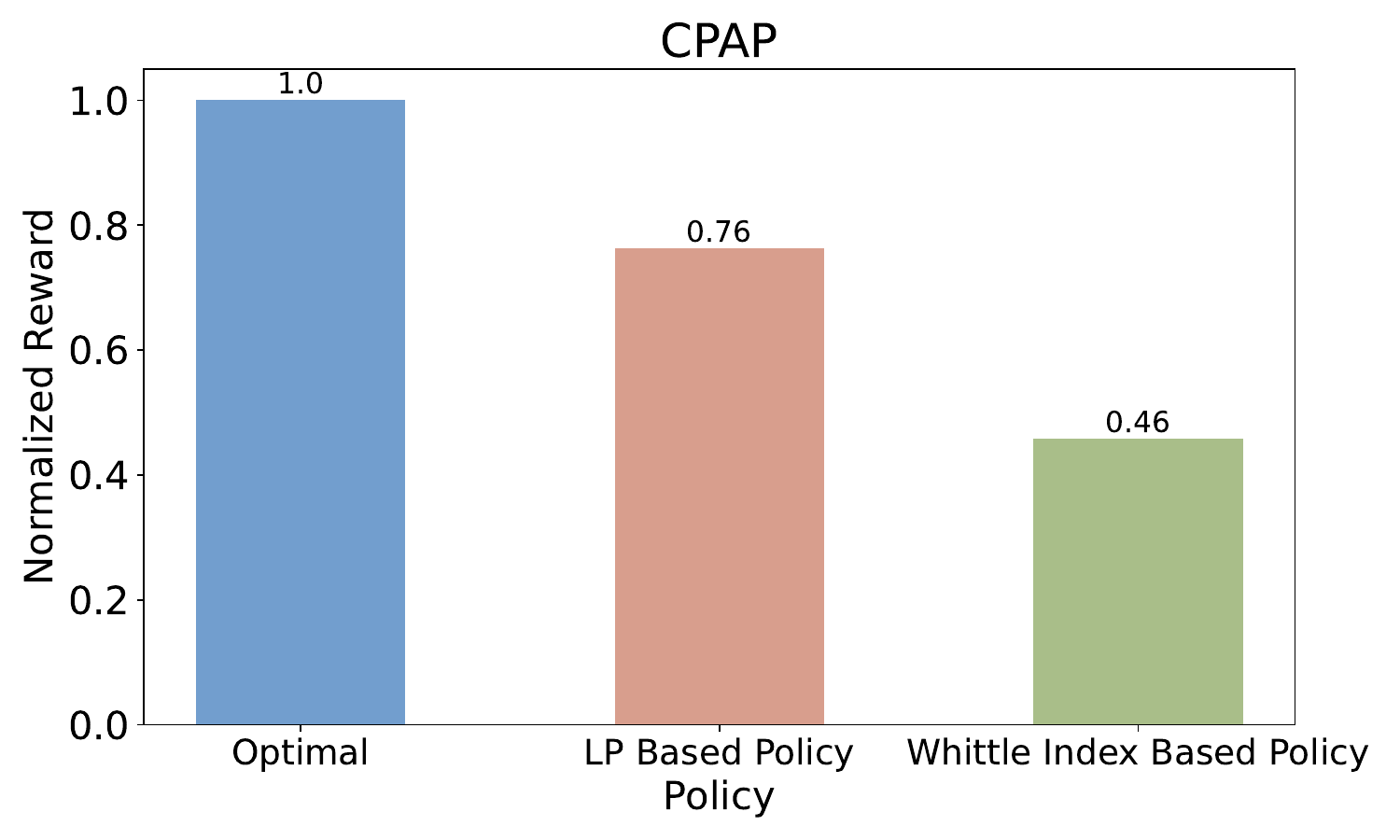}
\vspace{-0.1cm}
\caption{A CPAP setting with $3$ different states, $20$ different types of arms, each type has $10$ arms, the budget is set to be $10$ and the time Horizon is $10$.}
\label{fig:failure example}
\vspace{-0.2in}
\end{figure}
An intuitive explanation for why existing index policies fail in the single-pull setting is twofold. First, these indices are designed without accounting for the single-pull constraint. Second, traditional index policies pull arms from highest to lowest index until the activation budget is exhausted. However, in the \texttt{SPRMAB} setting, pulling the arm with the highest index at the current time may not lead to better results, as waiting for a future time slot could yield a higher reward. This makes the traditional approach ineffective in such scenarios.

\subsection{Challenge for Extending Existing Methods}
The limitations of existing index policies in addressing the \texttt{single-pull constraint} in \eqref{eq:one-pull-cons} become evident in the context of SPRMAB settings. Traditional policies such as the Whittle index fail to accommodate this constraint effectively because the additional dimensional constraint inherent in the single-pull scenario disrupts the foundational principles underpinning the Whittle index’s definition, rendering it inapplicable. Consequently, attention shifts to the LP-based index policy. This focus is due to the adaptability of LP approaches, which may allow for the integration of the \texttt{single-pull constraint} through modifications to the existing framework.  This approach requires re-evaluating the LP formulation to ensure it captures the critical aspects of decision-making under the stringent limitations imposed by the single-pull constraint.

Similar to  relaxing the ``hard'' \texttt{budget constraint}, we can also relax the \texttt{single-pull constraint} in \eqref{eq:one-pull-cons} so that the total number of pulls per arm is limited by 1 only in expectation as
\begin{align}\label{eq:relaxed_single_pull_cons}
\hspace{-0.2cm}\texttt{Re-single-pull constraint:}\mathbb{E}_{\pi}\left\{\sum_{t=1}^{T} a_n(t) \right\}\!\leq\! 1, \forall n.
\end{align}
Hence, we have the relaxed problem of \eqref{eq:One-Pull-Opt} expressed as

\begin{align}\label{eq:relaxed_One-Pull-Opt}\nonumber
\texttt{Re-SPRMAB}:&\max_{\pi}\mathbb{E}_\pi \left(\sum_{n=1}^{N} \sum_{t=1}^{T} r_n(t)\right)\\
&\mbox{s.t.}\mathbb{E}_{\pi}\left\{\sum_{n=1}^{N} a_n(t) \right\}\!\leq\! K, ~\mathbb{E}_{\pi}\left\{\!\sum_{t=1}^{T} a_n(t) \!\right\}\!\leq\! 1, \forall n.
\end{align}
According to the definition of OM $\mu$, the \texttt{Re-SPRMAB} in~\eqref{eq:relaxed_One-Pull-Opt} can be reformulated as the following LP \cite{altman2021constrained}:

\begin{align}\nonumber
\texttt{SPRMAB-LP:}~\max_{\mu}&\sum_{n=1}^{N}\sum_{t=1}^T\sum_{(s,a)} \mu_n(s,a;t){r}_n(s,a)\displaybreak[0]\\ \nonumber
\hspace{-0.1cm}\mbox{ s.t. }& Constriants~ \eqref{eq:con_rel_lp}-\eqref{eq:pmc_rel_lp2},\\   &\sum_{t=1}^T\sum_{s} \mu_n(s,1;t)\leq 1, \forall n. \label{eq:SPRMAB-LP}
\end{align}
It is clear that the \texttt{SPRAMB-LP} in \eqref{eq:SPRMAB-LP} achieves an upper bound of the optimal value  of \texttt{SPRAMB} in
\eqref{eq:One-Pull-Opt}, which is shown as the following proposition.

\begin{prop}\label{prop:upperbound1}
The optimal value achieved by \texttt{SPRMAB-LP} in \eqref{eq:SPRMAB-LP} is an upper bound of that of \texttt{SPRMAB} in \eqref{eq:One-Pull-Opt}.
\end{prop}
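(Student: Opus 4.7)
The plan is to establish the chain
\nal{\texttt{val(SPRMAB)} \;\leq\; \texttt{val(Re-SPRMAB)} \;=\; \texttt{val(SPRMAB-LP)},}
where the first step is a relaxation argument and the second step follows from the standard occupancy-measure reformulation of a finite-horizon constrained MDP. To start, I would take an arbitrary policy $\pi$ that is feasible for \eqref{eq:One-Pull-Opt}, i.e., one whose induced random action sequence satisfies $\sum_{n=1}^N a_n(t) \leq K$ almost surely for every $t\in[T]$ and $\sum_{t=1}^T a_n(t) \leq 1$ almost surely for every $n\in[N]$. Taking expectations under $\pi$ on both sides of these almost-sure inequalities and applying monotonicity of expectation immediately shows that $\pi$ satisfies the \texttt{Re-budget constraint} in \eqref{eq:relaxed_budget_cons} and the \texttt{Re-single-pull constraint} in \eqref{eq:relaxed_single_pull_cons}, so $\pi$ is feasible for \texttt{Re-SPRMAB} in \eqref{eq:relaxed_One-Pull-Opt}. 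Since the objective in \eqref{eq:One-Pull-Opt} and in \eqref{eq:relaxed_One-Pull-Opt} is identical, maximizing over the larger feasible region can only increase the optimum, yielding \texttt{val(SPRMAB)} $\leq$ \texttt{val(Re-SPRMAB)}.

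Next, I would verify that \texttt{Re-SPRMAB} and \texttt{SPRMAB-LP} have the same optimal value by using the occupancy measure $\mu_n(s,a;t) = \mathbb{P}_\pi(s_n(t)=s,a_n(t)=a)$. Standard results for finite-horizon MDPs (e.g., \cite{altman2021constrained}) give that (i) the OM of any policy satisfies the initial condition \eqref{eq:pmc_rel_lp2} and the flow equations \eqref{eq:pmc_rel_lp1}, and (ii) conversely, any nonnegative $\mu$ satisfying \eqref{eq:pmc_rel_lp1}--\eqref{eq:pmc_rel_lp2} is the OM of some (history-dependent, possibly randomized) policy. Rewriting the expectations in \eqref{eq:relaxed_budget_cons} and \eqref{eq:relaxed_single_pull_cons} as $\mathbb{E}_\pi[\sum_n a_n(t)] = \sum_n \sum_s \mu_n(s,1;t)$ and $\mathbb{E}_\pi[\sum_t a_n(t)] = \sum_t \sum_s \mu_n(s,1;t)$, the two expectation constraints become exactly \eqref{eq:con_rel_lp} and the new per-arm constraint in \eqref{eq:SPRMAB-LP}, and the objective rewrites as the linear functional $\sum_{n,t,s,a}\mu_n(s,a;t)r_n(s,a)$. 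This identifies the two optimization problems and gives \texttt{val(Re-SPRMAB)} = \texttt{val(SPRMAB-LP)}, which combined with the previous inequality completes the proof.

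The main obstacle is not conceptual but careful bookkeeping of the occupancy-measure bijection in the history-dependent regime: because the single-pull constraint is a \emph{sample-path} constraint in \eqref{eq:One-Pull-Opt} but only an \emph{expectation} constraint in \eqref{eq:relaxed_One-Pull-Opt}, one must be explicit that the OM $\mu$ only needs to capture per-time marginals $\mathbb{P}(s_n(t)=s,a_n(t)=a)$ for the expectation-level constraints of \texttt{SPRMAB-LP} to correctly encode \texttt{Re-SPRMAB}. Once this is observed, no further technical work is needed: the containment of feasible sets plus the objective-preserving OM reformulation immediately yield the upper-bound property stated in Proposition \ref{prop:upperbound1}.
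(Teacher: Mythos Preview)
Your proposal is correct and follows essentially the same approach as the paper: the paper's proof sketch also argues that $\Delta\subseteq\Delta'$ (the hard-constraint feasible set is contained in the expectation-constraint feasible set) to get $\texttt{val(SPRMAB)}\leq\texttt{val(Re-SPRMAB)}$, and then invokes the occupancy-measure equivalence from \cite{altman2021constrained} to identify $\texttt{val(Re-SPRMAB)}=\texttt{val(SPRMAB-LP)}$. Your write-up is in fact more explicit about the OM bijection and the expectation-vs-sample-path distinction than the paper's sketch, but the underlying argument is identical.
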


\begin{proof}[Proof Sketch]
 Since the \texttt{SPRMAB-LP}  in \eqref{eq:SPRMAB-LP} is equivalent to the relaxed problem \texttt{Re-SPRMAB} in \eqref{eq:relaxed_One-Pull-Opt} \cite{altman2021constrained},
it is sufficient to show that \texttt{Re-SPRMAB} in \eqref{eq:relaxed_One-Pull-Opt}  achieves no less average reward than the original problem \texttt{SPRMAB} in \eqref{eq:One-Pull-Opt}.
The proof is straightforward since the constraints in the relaxed problem  expand the feasible region of \texttt{SPRMAB} in \eqref{eq:One-Pull-Opt}. Denote the feasible region of \texttt{SPRMAB} in \eqref{eq:One-Pull-Opt}  as
\begin{align*}
    \Delta:=\Bigg\{a_n(t), \forall n, t\Bigg\vert\sum_{n=1}^{N}a_n(t)\leq K, \sum_{t=1}^T a_n(t)\leq 1\Bigg\},
\end{align*}
and the feasible region of \texttt{Re-SPRMAB} in \eqref{eq:relaxed_One-Pull-Opt} as
\begin{align*}
\Delta^\prime:=\left\{a_n(t), \forall n, t\Bigg\vert\mathbb{E}_{\pi}\left\{\sum_{n=1}^{N} a_n(t) \right\}\!\leq\! K, ~\mathbb{E}_{\pi}\left\{\!\sum_{t=1}^{T} a_n(t) \!\right\}\!\leq\! 1\right\}.
\end{align*}
It is clear that \texttt{Re-SPRMAB} expands the feasible region of \texttt{SPRMAB}, i.e., $\Delta\subseteq\Delta^\prime.$  Therefore, \texttt{Re-SPRMAB}  achieves an objective value no less than that of \texttt{SPRMAB} because the original optimal solution is also inside the relaxed feasibility set. This indicates \texttt{SPRMAB-LP} in \eqref{eq:SPRMAB-LP} achieves an optimal value no less than that of \eqref{eq:One-Pull-Opt}.
\end{proof}

One drawback of\texttt{SPRMAB-LP} in \eqref{eq:SPRMAB-LP} is that the mapping of \texttt{single-pull constraint}  from \eqref{eq:One-Pull-Opt} to the one in \eqref{eq:SPRMAB-LP} will make the ``hard'' constraint relaxed to a significant extent, as the probability of $\mu_n(s,1;t)$ will diffuse to different time steps, which contradicts with the real scenario where arms only be pulled for one particular time slot. This will make the associated index policy designed upon the solution of  \texttt{SPRMAB-LP} be significantly suboptimal. To make the relaxed problem tighter, we need to add the following constraint: for arbitrary time slot $t$, if arm $n$ is being activated, the arm should never be activated in other time slots $t^\prime$ with $t^\prime\neq t$, which can be mapped as
    \begin{align}\label{eq:add-on-constraint}
    {\sum_{s}} \mu_n(s,1;t)\cdot{\sum_{s}} \mu_n(s,1;t^\prime)=0.
\end{align}
Incorporating  the additional constraint in \eqref{eq:add-on-constraint} into \texttt{SPRMAB-LP} in \eqref{eq:SPRMAB-LP}, we have the following optimization problem:
\begin{align}\label{eq:one-pull-lp}\nonumber
\max_{\mu}&~~\sum_{n=1}^{N}\sum_{t=1}^T\sum_{(s,a)} \mu_n(s,a;t){r}_n(s,a)\displaybreak[0]\\ \nonumber
\hspace{-0.1cm}\mbox{ s.t. }& Constraints ~\eqref{eq:con_rel_lp}-\eqref{eq:pmc_rel_lp2},~ \eqref{eq:SPRMAB-LP}\\
&{\sum_{s}} \mu_n(s,1;t)\cdot{\sum_{s}} \mu_n(s,1;t^\prime)=0.
\end{align}
Regarding the novel optimization problem in \eqref{eq:one-pull-lp}, we have the following proposition.
\begin{prop}\label{prop:upper_bound2}

The optimal value achieved by \eqref{eq:one-pull-lp} lies in the middle of that by \texttt{SPRMAB-LP} in \eqref{eq:SPRMAB-LP} and that of \eqref{eq:One-Pull-Opt}.
\end{prop}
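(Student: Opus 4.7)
My plan is to establish the sandwich inequality $V(\eqref{eq:One-Pull-Opt}) \le V(\eqref{eq:one-pull-lp}) \le V(\eqref{eq:SPRMAB-LP})$ by proving the two sides separately.

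The upper bound $V(\eqref{eq:one-pull-lp}) \le V(\eqref{eq:SPRMAB-LP})$ I would obtain by a direct containment argument: the program in \eqref{eq:one-pull-lp} inherits every constraint of \texttt{SPRMAB-LP} and additionally imposes the nonconvex product constraint \eqref{eq:add-on-constraint}. Consequently the feasible region of \eqref{eq:one-pull-lp} is a subset of that of \texttt{SPRMAB-LP}, and maximizing the same linear objective over a smaller set cannot produce a larger optimum. This side of the proof requires nothing beyond pointing at feasibility.

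For the more substantive lower bound $V(\eqref{eq:One-Pull-Opt}) \le V(\eqref{eq:one-pull-lp})$, the strategy is to show that for every feasible policy $\pi$ of \texttt{SPRMAB}, some feasible solution of \eqref{eq:one-pull-lp} achieves value at least $V^\pi$. The key structural fact I would invoke is that the per-realization single-pull constraint \eqref{eq:one-pull-cons} guarantees a well-defined random pull profile $\tau^\pi=(T_1^\pi,\ldots,T_N^\pi)$ under $\pi$, with $T_n^\pi\in\{0,1,\ldots,T\}$ and $T_n^\pi=0$ indicating that arm $n$ is never pulled. Decomposing the expected return as $V^\pi=\sum_\tau \mathbb{P}_\pi(\tau^\pi=\tau)\,V^\pi(\tau)$, I would observe that \eqref{eq:one-pull-lp} can equivalently be written as $\max_\tau V_{\mathrm{LP}}(\tau)$, where $V_{\mathrm{LP}}(\tau)$ denotes the restricted LP obtained by forcing arm $n$'s pull mass to sit at time $\tau_n$ (which automatically satisfies \eqref{eq:add-on-constraint}). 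Bounding each conditional reward $V^\pi(\tau)$ by $V_{\mathrm{LP}}(\tau)$ and then using $V^\pi\le\max_\tau V^\pi(\tau)\le\max_\tau V_{\mathrm{LP}}(\tau)$ would complete the argument.

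The main obstacle will be the per-profile comparison $V^\pi(\tau)\le V_{\mathrm{LP}}(\tau)$. Naively plugging the conditional occupancy measure $\mu^{\pi\mid\tau}$ into $V_{\mathrm{LP}}(\tau)$ fails, since conditioning $\pi$ on $\tau^\pi=\tau$ biases the state marginals of $\pi$ and breaks the fluid-equilibrium equations \eqref{eq:pmc_rel_lp1} required by the restricted LP. My intended fix is to couple the argument with the expected-budget slack in \eqref{eq:con_rel_lp}: after reweighting by $\mathbb{P}_\pi(\tau^\pi=\tau)$ one recovers the unconditional OM $\mu^\pi$, which does satisfy the fluid equilibrium, and the extra freedom granted by relaxing the activation budget to expectation should absorb the distributional gap introduced by $\pi$'s adaptive decisions. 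I anticipate this averaging-plus-duality step, likely via a primal-dual exchange, to be the technically delicate part where most of the proof work will sit.
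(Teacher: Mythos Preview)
Your upper bound $V(\eqref{eq:one-pull-lp}) \le V(\eqref{eq:SPRMAB-LP})$ is correct and is exactly what the paper has in mind: the paper's entire argument for Proposition~\ref{prop:upper_bound2} is the one-line remark that it ``follows a similar argument as that in Proposition~\ref{prop:upperbound1}'', i.e., pure feasible-set containment.

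The lower bound, however, is where your plan hits a genuine obstacle, and your proposed averaging-plus-duality patch cannot repair it, because the inequality $V(\eqref{eq:One-Pull-Opt})\le V(\eqref{eq:one-pull-lp})$ is \emph{false} in general. Take $N=1$, $K=1$, $T=2$, state space $\{A,B,C,D\}$, initial law uniform on $\{A,B\}$, action-independent deterministic transitions $A\to C$, $B\to D$ (with $C,D$ absorbing), and rewards $r(A,1)=r(D,1)=1$, all others $0$. The adaptive \texttt{SPRMAB} policy ``pull at $t=1$ if in $A$, otherwise pull at $t=2$'' satisfies \eqref{eq:one-pull-cons} on every sample path and earns expected reward $1$. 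But \eqref{eq:add-on-constraint} forces the single arm to put all its active mass at one time slot: with $\tau=1$ the objective is at most $\mu(A,1;1)\le \tfrac12$, and with $\tau=2$ at most $\mu(D,1;2)\le \tfrac12$. Hence $V(\eqref{eq:one-pull-lp})=\tfrac12<1=V(\eqref{eq:One-Pull-Opt})$, while $V(\eqref{eq:SPRMAB-LP})=1$.

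The failure is precisely the one you flagged and then hoped to fix: conditioning on a pull profile $\tau$ shifts the state marginals (here, conditioning on $\tau=1$ forces the time-$1$ state to be $A$ with probability $1$), so the conditional OM violates \eqref{eq:pmc_rel_lp2} and \eqref{eq:pmc_rel_lp1}; and no reweighting can simultaneously restore those linear constraints and the nonlinear constraint \eqref{eq:add-on-constraint}. The paper's containment sketch overlooks the same point: the occupancy measure of a feasible \texttt{SPRMAB} policy can legitimately spread active mass across several time slots while still pulling the arm at most once per realization, so it need not lie in the feasible set of \eqref{eq:one-pull-lp}. Only the upper half of your sandwich can actually be established.
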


\begin{remark}\label{remark:1}
The proof of Proposition \ref{prop:upper_bound2} follows a similar argument as that in Proposition \ref{prop:upperbound1}. It indicates that the modified problem in \eqref{eq:one-pull-lp} achieves a tighter upper bound compared with the \texttt{SPRMAB-LP} in \eqref{eq:SPRMAB-LP}, and thus the associated index policy designed upon the solution of \eqref{eq:one-pull-lp} performs better than that derived from \eqref{eq:SPRMAB-LP}. However,  the above formulation in \eqref{eq:one-pull-lp} is not an LP any longer due to the last constraint, which leads to outstanding challenge in solving the above revised optimization problem when the state space $|\cS|$ and time horizon $T$ are large.
\end{remark}


\section{Proposed Method}\label{Section4}
To address the challenge of solving the problem in \eqref{eq:one-pull-lp} posed by the nonlinear constraint in \eqref{eq:add-on-constraint} (as indicated in Remark \ref{remark:1}), we propose a novel method to handle the \texttt{single-pull constraint}. This method involves modifying the underlying Markov Decision Processes (MDPs) associated with the arms by introducing the concept of \emph{dummy states}. 

 In the considered \texttt{SPRMABs}, for arbitrary state of each arm $s\in\cS$ at current time step $t$, it transitions to next state $s^\prime\in\cS$ at time $t+1$ if a pull is assigned to this arm. After reaching state $s^\prime$ at time $t+1$, the arm will never be pulled again. For every pulled arm, regardless of its state or current time step, the available action set thereafter is restricted to $\{0\}$. 
Building on the aforementioned observation, we introduce \textit{dummy states} to represent the states reached immediately after an arm is pulled. We enforce that these dummy states have the same transition kernels and reward functions under both actions 0 and 1, identical to those of their corresponding normal states. The formal definition is given as follows.

\begin{defn}[Dummy state]
A dummy state $s_d$ represents the state $s\in\cS$ that being transited immediately when an arm is pulled. For a dummy state $s_d$ we have the following properties:
\begin{align}
    P_n(s_d, 0, s_d^\prime)&=P_n(s_d, 1, s_d^\prime)=P_n(s,0,s^\prime), \nonumber\\r_n(s_d,0)&=r_n(s_d,1)=r_n(s,0), \forall n,
\end{align}
i.e., actions $0$ and $1$ are indifferent in dummy states for all arms.
\end{defn}

\begin{remark}
For every state $s\in\cS$, it has a corresponding dummy state $s_d$. When introducing the dummy states, we duplicate the original state space $\cS$ and define the dummy state space as $\cS_d$. As a result, the system now has a new expanded state space $\cS^\prime:=\cS\bigcup\cS_d$. The intuitive idea behind introducing dummy states and enforcing indifference between actions 0 and 1 for these states is to ensure that the resource budget flows toward arms in non-dummy states, as arms in dummy states yield no gain even if resources are allocated to them. Another key advantage of using dummy states is that they allow us to eliminate the nonlinear constraint in \eqref{eq:add-on-constraint}. These points will be discussed in more detail in subsequent sections.
\end{remark}

To better understand how dummy states work, we present the following toy example.
\begin{example}
Consider a setting where the original state space is  $\mathcal{S} = \{s_0, s_1\} $. We introduce two corresponding dummy states,  $s_{0,d}$  and  $s_{1,d}$ , which are absorbing states. As a result, the expanded state space becomes  $\mathcal{S}^{\prime} := \{s_0, s_1, s_{0,d}, s_{1,d}\}$. In this setup, once an arm transitions into a dummy state (either  $s_{0,d}$  or  $s_{1,d}$ ), it remains in dummy states indefinitely, regardless of the action taken. Hence, both  $s_{0,d}$  and  $s_{1,d}$  are absorbing states, meaning that no matter which action is chosen (either action 0 or action 1), the transition probabilities from these dummy states remain the same. The transitions between these states are illustrated in Figure \ref{fig:4SMDP}.
Thus, arms in these dummy states provide no additional reward even for positive action assignment, ensuring that resources are directed toward arms in non-dummy states, which can still benefit from positive action assignments.
\end{example}

\begin{figure}
    \centering
    \begin{tikzpicture}[->, >=stealth', auto, semithick, node distance=2.7cm]
	\tikzstyle{every state}=[fill=white,draw=black,thick,text=black,scale=0.8]
	\node[state, fill=blue!30]    (A)                     {$s_0$};
	\node[state, fill=blue!30]    (B)[below of=A]   {$s_1$};
	\node[state, fill=green!30]    (C)[right of=A]   {$s_{0,d}$};
	\node[state, fill=green!30]    (D)[right of=B]   {$s_{1,d}$};
	\path
	(A) edge[loop left]			node{$P^0_{s_0,s_0}$}	(A)
        (A) edge node{$P^0_{s_0,s_1}$}   (B)
	(B) edge[bend left]	node{$P^0_{s_1,s_0}$}	(A)
        (B) edge[loop left]			node{$P^0_{s_1,s_1}$}	(B)
        (C) edge[loop right]			node{$P^{1}_{s_{0,d},s_{0,d}}=P^{0}_{s_{0,d}, s_{0,d}}$}	(C)
        (C) edge[bend left]  node{$P^{1}_{s_{0,d},s_{1,d}}=P^{0}_{s_{0,d},s_{1,d}}$}   (D)
	(D) edge	node{$P^{1}_{1_d,0_d}$}	(C)
        (D) edge[loop right]			node{$P^{1}_{s_{1,d},s_{1,d}}=P^{0}_{s_{1,d},s_{1,d}}$}	(D)
	(A) edge[bend left] node{$P^{1}_{s_0,s_{0,d}}$}  (C)
        (A) edge[pos=0.25,above] node{$P^1_{s_0,s_{1,d}}$} (D)
        (B) edge[pos=0.30,below] node{$P^1_{s_1,s_{0,d}}$} (C)
        (B) edge[bend right] node{$P^1_{s_1,s_{1,d}}$} (D);

	\end{tikzpicture}
    \caption{A toy example of \texttt{SPRAMB} with dummy states. The original state space is $\cS=\{s_0,s_1\}$, and it leads to a 4-state expanded system as $\cS^\prime=\{s_0, s_1, s_{0,d}, s_{1,d}\}$. }
    \label{fig:4SMDP}
    \vspace{-0.2in}
\end{figure}
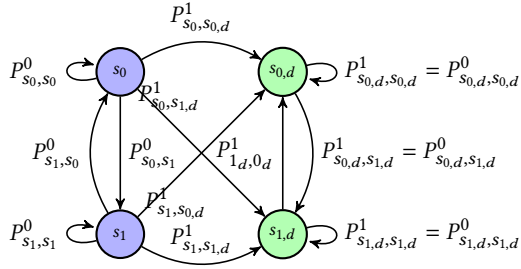

Once we provide the new state space $\cS^\prime$ containing dummy states and the new transition kernels, we have a new formulation as follows:
\begin{align}\nonumber
\max_{\mu}&~~\sum_{n=1}^{N}\sum_{t=1}^T\sum_{s\in\cS^\prime}\sum_{a\in\cA} \mu_n(s,a;t){r}_n(s,a)\displaybreak[0]\\ \nonumber
\hspace{-0.1cm}\mbox{ s.t. }& ~~{ \sum_{n=1}^{N}\sum_{s\in\cS^\prime} \mu_n(s,1;t)\leq K}, ~\forall t,\displaybreak[1]\\ \nonumber
& {\sum_{a\in\cA}} \mu_n(s,a;t)=\!\!\sum_{s^\prime\in\cS^\prime}\!\!\sum_{a^\prime\in\cA}\!\!\mu_n(s^\prime, a^\prime; t-1)P_n(s^\prime, a^\prime,s), \forall n,s\in\cS^\prime,\displaybreak[3]\\
&\sum_{a}\mu_n(s,a;1)={\bold{s}_1(s)},
~\forall s\in\cS, n.
\label{eq:Obj_LP_final}    
\end{align}
\begin{remark}
   Given the expanded state space $\cS^\prime$
  with dummy states and the modified transition kernels, we can remove both the \texttt{single-pull constraint} in \eqref{eq:SPRMAB-LP}
 and the nonlinear constraint in \eqref{eq:add-on-constraint}. The reason we can eliminate the \texttt{single-pull constraint} is that once an arm is pulled, it transitions to a dummy state, where there is no difference between action 1 and action 0. As a result, when arms in non-dummy states can benefit from a positive action, the active OM naturally flows to those arms. Similarly, the nonlinear constraint to concentrate the OM in \eqref{eq:add-on-constraint}
 is no longer required, as it becomes irrelevant without the \texttt{single-pull constraint}.

\end{remark}

\subsection{The Single-Pull Index Policy}

Once we solve \eqref{eq:Obj_LP_final} and get the optimal $\mu^\star$, we can define the following Markovian policy\footnote{If the denominator equals to 0, we direct set $\chi\ust_n(s,1;t)=0$.}
\begin{align}
\chi\ust_n(s,1;t):=
\frac{\mu\ust_n(s,1;t)}{\mu\ust_n(s,0; t)+\mu\ust_n(s,1; t)}\in[0,1],
\label{def:policy}
\end{align}
which denotes the probability of selecting action $1$ for arm $n$ with state $s$ at time $t$. 
  Note that the optimal policy~(\ref{def:policy}) is not always feasible for \texttt{SPRMAB} since in the latter at most $K$ units of activation costs can be consumed at a time and the arm can only be pulled once. To this end, we construct our single-pull index (\texttt{SPI})  $\cI_n(s_n(t);t)$ associated with arm $n$ at time $t$ as 
\begin{align}\label{eq:occupancy-index}
    \cI_n(s_n(t);t):=\chi\ust_n(s_n(t),1;t)r_n(s_n(t),1), 
\end{align}
where $\chi\ust_n(s_n(t),a;t)$ is defined in \eqref{def:policy}.
Notice that our \texttt{SPI} measures the expected reward for activating the arm $n$ in state $s_n(t)$ at time $t$. Therefore, a higher index indicates a higher expected reward, suggesting that the intuition is to activate arms that contribute more significantly to the accumulated reward.
Our index policy then activates arms with \texttt{SPI} indices in a decreasing order.  The entire procedure is summarized in Algorithm \ref{alg:Index}.

\begin{algorithm}
	\caption{\texttt{SPI} Index Policy}
	\label{alg:Index}
\textbf{Input}: 
	Initialize $\bs_1(s)$ $\forall n\in[N].$
 \begin{algorithmic}[1]
	 	\STATE Construct the LP according to \eqref{eq:Obj_LP_final} 
	  and solve the occupancy measure $\mu^*$; 
			\STATE Compute $\chi_n^\star(s,a,t),  \forall s, a, t$ according to  \eqref{def:policy};
		\STATE Construct the \texttt{SPI} set $\cI(t):=\{I_n(s_n(t);t): n\in[N]\}$ according to \eqref{eq:occupancy-index}; and sort $\cI(t)$ in a decreasing order; \label{step:algo}
		\IF{Budget remains}
	\STATE Activate arms according to the order in step \ref{step:algo} ;
 \IF{the activated arm is in dummy states}
\STATE Do not pull the arm and let the budget minus one unit;
 \ENDIF
\ENDIF	
\end{algorithmic}
\end{algorithm}


\begin{remark}
    The introduction of dummy states and Algorithm 1 addresses two key challenges in SPRMABs. First, in real-world applications, pulling an arm isn't always better than not pulling it, meaning the total budget 
$K$ may not need to be fully used. Existing index policies often pull 
$K$ arms with the highest indices, leading to suboptimal decisions. Second, in SPRMABs, where each arm can only be pulled once, the timing of pulling an arm is critical. Even if an arm has the highest index now, waiting for a future time slot may yield a higher reward, a factor overlooked by existing algorithms. Dummy states allow for conserving budget, as pulling an arm in a dummy state has no impact on future rewards, addressing both challenges effectively.
\end{remark}

\subsection{Asymptotic and Non-Asymptotic Optimality}\label{sec:thm_result}
We now provide results on asymptotic and non-asymptotic optimality for our new index. We begin by showing that our index is asymptotically optimal in the same asymptotic regime as that in Whittle \cite{whittle1988restless} and others \cite{verloop2016asymptotically,weber1990index,zhang2021restless,xiong2021reinforcement}.  With some abuse of notation, let the number of users be $\rho N$ and the resource constraint be $\rho K$ in the asymptotic regime with $\rho\rightarrow\infty.$  In other words, we consider $N$ different classes of users with each class containing $\rho$ users. Let $J^{\pi}(\rho K, \rho N)$ denote the expected total reward of the original problem~(\ref{eq:one-pull-cons}) under an arbitrary policy $\pi$ for such a system. Denote the optimal policy for the original problem~(\ref{eq:One-Pull-Opt}) as $\pi^{Opt}:=\{\pi_n^{Opt}, \forall n\in\mathcal{N}\}$.

\begin{thm}\label{thm:asym_opt}
The designed \texttt{SPI} policy (Algorithm~\ref{alg:Index}) is asymptotically optimal, i.e., 
\begin{align}\label{eq:Asym_opt}
\lim_{\rho\rightarrow \infty} \frac{1}{\rho N}\Big(J^{\pi^{Opt}}(\rho K, \rho N)- J^{\pi^{SPI}}(\rho K, \rho N)\Big)=0.
\end{align}
\end{thm}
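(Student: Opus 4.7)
The plan is to combine an upper bound from the LP in~\eqref{eq:Obj_LP_final} with a concentration argument showing that the \texttt{SPI} policy asymptotically matches this bound. First, I would observe that the expanded-system LP upper-bounds the optimal value of the original \texttt{SPRMAB}: any policy feasible for~\eqref{eq:One-Pull-Opt} induces a feasible occupancy measure on the augmented state space, since the dummy states encode the single-pull constraint exactly (once an arm is pulled it is absorbed into a dummy subsystem where the reward is identical to the passive branch). Scaling the single-arm LP by $\rho$ therefore yields $J^{\pi^{Opt}}(\rho K,\rho N)\le \rho J^{LP}$, so it suffices to show $J^{\pi^{SPI}}(\rho K,\rho N)\ge \rho J^{LP}-o(\rho)$.

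Next, I would introduce a companion ``randomized LP'' policy $\pi^{\chi^\star}$ that, at every $t$, independently assigns action $a$ to each arm currently in state $s$ with probability $\chi_n^\star(s,a;t)$ from~\eqref{def:policy}. By construction $\pi^{\chi^\star}$ reproduces the single-arm occupancy $\mu^\star$, so its expected total reward equals exactly $\rho J^{LP}$. However, $\pi^{\chi^\star}$ only satisfies the budget in expectation and may overshoot on individual sample paths, so it is not implementable on its own and must be coupled with \texttt{SPI}.

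The third step is the coupling, carried out class by class. Within each of the $N$ classes the $\rho$ arms are i.i.d., so if $\hat\mu_n(s,a;t)$ denotes the empirical fraction of class-$n$ arms occupying $(s,a)$ under $\pi^{\chi^\star}$, Hoeffding's inequality gives $|\hat\mu_n(s,a;t)-\mu_n^\star(s,a;t)|=\tilde{\mathcal{O}}(1/\sqrt{\rho})$ with probability $1-o(1)$. Consequently, the number of arms that $\pi^{\chi^\star}$ attempts to activate at time $t$ differs from $\rho K$ by only $\tilde{\mathcal{O}}(\sqrt{\rho})$, and the decreasing-index rule used by \texttt{SPI} selects essentially the same arms as $\pi^{\chi^\star}$ except for a boundary layer of $\tilde{\mathcal{O}}(\sqrt{\rho})$ ``tie-breaking'' arms. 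The dummy states are crucial here: an already-pulled arm has $\chi_n^\star(s_d,1;t)=0$ by construction, so its \texttt{SPI} value collapses and the single-pull constraint is automatically honoured without perturbing the LP-prescribed occupancy; the resource budget naturally flows toward arms still in the original state space, exactly as the LP prescribes.

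The main obstacle I expect is making this coupling tight at the slots where the budget binds, since \texttt{SPI} must truncate to at most $\rho K$ active arms whereas $\pi^{\chi^\star}$ may want slightly more or fewer. I would bound the per-slot reward discrepancy caused by each such truncation by $r_{\max}$ times the $\tilde{\mathcal{O}}(\sqrt{\rho})$ deviation, sum over the finite horizon $T$, and obtain $J^{\pi^{SPI}}(\rho K,\rho N)\ge \rho J^{LP}-\tilde{\mathcal{O}}(\sqrt{\rho})$. Dividing by $\rho N$ gives an average optimality gap of order $\tilde{\mathcal{O}}(1/\sqrt{\rho})$, which both tends to zero (yielding~\eqref{eq:Asym_opt}) and provides the finite-$\rho$ rate advertised in the abstract.
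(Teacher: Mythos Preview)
Your proposal is correct and follows essentially the same route as the paper: establish the LP~\eqref{eq:Obj_LP_final} as an upper bound on $J^{\pi^{Opt}}$, then use a concentration argument (the paper phrases it as almost-sure convergence of the empirical state--action fractions $B_n(s;t)/\rho$ and $D_n(s,1;t)/\rho$ to the LP quantities $P_n(s;t)$ and $P_n(s;t)\chi_n^\star(s,1;t)$) to show that $\frac{1}{\rho N}J^{\pi^{SPI}}$ tends to the LP value. Your explicit introduction of the randomized companion policy $\pi^{\chi^\star}$ is a clean device for making the concentration step transparent, but the underlying mechanism---Hoeffding-type fluctuations of order $\tilde{\mathcal O}(\sqrt{\rho})$ in the number of arms per state, propagated over the finite horizon---is exactly what the paper's sketch (and the references it defers to) rely on.
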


\begin{remark}
Theorem \ref{thm:asym_opt} indicates that as the number of per-class users goes to infinity, the average gap between the performance achieved by our \texttt{SPI} policy $ \pi^{SPI}$ and the optimal policy $\pi^{Opt}$ tends to be zero. This is a well-established criteria for showing the "optimality" of designed index policies in existing work \cite{xiong2021reinforcement,whittle1988restless, zhang2021restless, hu2017asymptotically, verloop2016asymptotically}, and the results only hold for $\rho\rightarrow\infty$. In the following, we present a more rigorous characterization of the optimality gap under finite scaling factor $\rho$, which is given by Theorem  \ref{thm:non-asymptotic}. 
\end{remark}

\begin{thm}\label{thm:non-asymptotic}
For a finite number of $\rho\in\mathbb{R}^+$, with probability at least $1-\frac{1}{\rho}$ such that the average gap between the performance achieved by our index policy $ \pi^{Index}$ and the optimal policy $\pi^{SPI}$ is given as 
\begin{align}
    &\frac{1}{\rho N}\Big(J^{\pi^{Opt}}(\rho B, \rho N)- J^{\pi^{SPI}}(\rho B, \rho N)\Big)\nonumber\\
    &\leq 2r_{\text{max}}T\sqrt{\frac{ \ln 2\rho}{2N\rho}}+5r_{\text{max}}T\sqrt{\frac{\ln 2\rho}{2\rho^3}}. 
\end{align}
\end{thm}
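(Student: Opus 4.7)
The plan is to bound $J^{\pi^{Opt}}(\rho K, \rho N) - J^{\pi^{SPI}}(\rho K, \rho N)$ by first replacing the optimum with the LP upper bound and then comparing the \texttt{SPI} policy to an idealized randomized process via concentration inequalities. By Proposition~\ref{prop:upperbound1}, applied to the dummy-state LP~\eqref{eq:Obj_LP_final} under $\rho$-scaling, we have $J^{\pi^{Opt}}(\rho K, \rho N) \le V^{LP}(\rho)$, where $V^{LP}(\rho) = \rho \sum_{n,t,s,a} \mu^\star_n(s,a;t) r_n(s,a)$. It therefore suffices to show that $V^{LP}(\rho) - J^{\pi^{SPI}}(\rho K, \rho N)$ is bounded by $\rho N$ times the right-hand side of the theorem.

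First, I would introduce a coupled idealized process in which each of the $\rho N$ arms independently follows the Markovian policy $\chi^\star$ from~\eqref{def:policy} on the expanded state space $\cS^\prime$, with neither the hard activation budget nor the single-pull constraint imposed. The dummy-state construction guarantees that $\chi^\star_n(s_d,1;t) = 0$ for every dummy state $s_d$ (since $r_n(s_d,1) = r_n(s_d,0)$ and the LP concentrates active mass on non-dummy states), so the single-pull constraint is automatically preserved. Under this coupling the expected cumulative reward equals $V^{LP}(\rho)$ exactly. Let $\widehat{Z}^\rho_n(s;t)$ denote the normalized empirical fraction of class-$n$ arms in state $s\in\cS^\prime$ at time $t$. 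Since the $\rho$ arms in each class evolve independently under $\chi^\star$, Hoeffding's inequality gives
\[
\Pr\!\bigl(|\widehat{Z}^\rho_n(s;t) - \bE[\widehat{Z}^\rho_n(s;t)]| \ge \epsilon\bigr) \le 2\exp(-2\rho\epsilon^2),
\]
and taking $\epsilon = \sqrt{\ln(2\rho)/(2N\rho)}$ together with a union bound over the $|\cS^\prime|NT$ triples yields concentration of all per-class state fractions with probability at least $1 - 1/(2\rho)$. Multiplying by $r_{\max}$ and summing over the horizon $T$ produces the leading term $2r_{\max}T\sqrt{\ln(2\rho)/(2N\rho)}$ of the bound.

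Next, I would quantify the reward lost because \texttt{SPI} enforces the hard budget $\rho K$ at each step whereas the idealized process meets it only in expectation. Let $O(t)$ denote the number of would-be activations in excess of $\rho K$ at time $t$ under the coupled process. Since the expected activation count equals $\rho K$ by LP feasibility, $O(t)$ is a zero-mean deviation whose standard deviation, after per-arm normalization, scales as $\rho^{-1}$. A second Hoeffding bound on the aggregated overflow, combined with the observation that Algorithm~\ref{alg:Index} discards denied activations in decreasing index order so that at most the lowest-index arms are dropped, each forfeiting at most $r_{\max}$ units of reward, yields the secondary, faster-decaying term $5r_{\max}T\sqrt{\ln(2\rho)/(2\rho^3)}$. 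A final union bound over the two concentration events produces the theorem with probability at least $1 - 1/\rho$.

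The main obstacle is the second step, namely establishing the $\rho^{-3/2}$ decay rate of the budget-violation term rather than the naive $\rho^{-1/2}$. The improvement hinges on two ingredients: (i) LP-feasibility of $\chi^\star$ centers the overflow at zero, so its concentration is sharper than a worst-case per-step Hoeffding bound; and (ii) the dummy-state mechanism prevents budget from being spent on already-activated arms, ensuring that the overflow reflects only genuine demand fluctuations that vanish in the fluid limit. I would formalize this by coupling the \texttt{SPI} and randomized activation sets time-step by time-step, bounding their symmetric difference by the centered overflow, and invoking independence across classes to recover the sharper rate.
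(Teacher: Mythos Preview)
Your decomposition is genuinely different from the paper's. You route through the LP upper bound and an idealized independent-arm process running $\chi^\star$, attributing the leading term to Hoeffding concentration of empirical state fractions and the secondary term to budget overflow. The paper never invokes the LP value or an idealized process; it compares $\pi^{Opt}$ and $\pi^{SPI}$ directly. The leading $\rho^{-1/2}$ term there comes from a McDiarmid-type inequality with a Hellinger-integral correction that accounts for the dependence among arms induced by the shared budget (so the arms are \emph{not} treated as independent). The $\rho^{-3/2}$ term comes from bounding $\mathbb{E}[\Delta V]$ itself: the per-step expected reward gap is controlled by the total-variation distance between the two policies' state distributions plus the difference in activation probabilities, a recursion (their Lemma~4) collapses the TV term into the action-probability term, and a Hoeffding bound over the $\rho$ replicas inside each class then gives $\mathbb{E}[\Delta V]=O(NT\sqrt{1/\rho})$, which becomes $\rho^{-3/2}$ after dividing by $\rho N$.

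There is a real gap in your overflow argument that the paper's route avoids. You assert that the per-arm-normalized overflow has standard deviation scaling as $\rho^{-1}$, but the sketch does not support this. The instantaneous activation count is a sum of order $\rho N$ Bernoulli variables with mean $\rho K$; its fluctuation is $\Theta(\sqrt{\rho})$, so $O(t)/(\rho N)=\Theta(\rho^{-1/2})$, not $\rho^{-1}$. Your ingredient (i), zero-mean centering, buys nothing extra—Hoeffding is already centered at the mean. Ingredient (ii), the dummy-state mechanism, enforces single-pull but does not shrink the variance of instantaneous demand. Without a further non-degeneracy assumption or a sharper propagation-of-chaos argument, the overflow route delivers only the naive $\rho^{-1/2}$ rate you flagged, and the second term would be absorbed into the first. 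The paper sidesteps this because its $\rho^{-3/2}$ does not come from overflow at all: it comes from showing the \emph{expected} gap is already $O(NT/\sqrt{\rho})$ via the TV-distance recursion, and the extra factor of $\rho$ in the denominator arises only when one normalizes by $\rho N$ at the end.
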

\begin{cor}
    Theorem \ref{thm:non-asymptotic} indicates that the average gap between the
performance achieved by our index policy $\pi^{Index}$ and the optimal
policy $\pi^{Opt}$ is of the order of $\mathcal{O}\left(\frac{1}{\rho^{1/2}}+\frac{1}{\rho^{3/2}}\right)$, which is dominated by the first term. We also observe from Theorem \ref{thm:non-asymptotic} that if multiple factor $\rho$ goes to infinity, the performance gap converges to $0$, which resume the asymptotic optimality in Theorem \ref{thm:asym_opt}.
\end{cor}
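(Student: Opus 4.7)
\begin{proofsketch}[Proof Plan]
The plan is to obtain both claims of the corollary by directly inspecting the explicit non-asymptotic bound supplied by Theorem~\ref{thm:non-asymptotic}, rather than re-deriving it. Recall that Theorem~\ref{thm:non-asymptotic} gives, with probability at least $1-1/\rho$,
\begin{equation*}
\frac{1}{\rho N}\Big(J^{\pi^{Opt}}(\rho K,\rho N)-J^{\pi^{SPI}}(\rho K,\rho N)\Big)\;\le\;2r_{\max}T\sqrt{\frac{\ln 2\rho}{2N\rho}}+5r_{\max}T\sqrt{\frac{\ln 2\rho}{2\rho^{3}}}.
\end{equation*}
Treating $N$, $T$, and $r_{\max}$ as constants independent of $\rho$, the first step is a term-by-term asymptotic analysis: the first summand is of order $\sqrt{\ln\rho}\cdot\rho^{-1/2}=\tilde{\mathcal{O}}(\rho^{-1/2})$, and the second summand is of order $\sqrt{\ln\rho}\cdot\rho^{-3/2}=\tilde{\mathcal{O}}(\rho^{-3/2})$. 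Adding them gives $\tilde{\mathcal{O}}(\rho^{-1/2}+\rho^{-3/2})$; since $\rho^{-3/2}/\rho^{-1/2}=\rho^{-1}\to 0$, the first term dominates for large $\rho$, yielding the claimed rate.

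The second step is to pass to the limit $\rho\to\infty$ in order to recover Theorem~\ref{thm:asym_opt}. Each of the two explicit terms on the right-hand side tends to zero because $\sqrt{\ln\rho/\rho}\to 0$ and $\sqrt{\ln\rho/\rho^{3}}\to 0$. Because the confidence level $1-1/\rho$ itself tends to $1$, any fixed positive tolerance $\varepsilon$ will be exceeded only on an event whose probability is upper-bounded by $1/\rho$; invoking Borel--Cantelli (the series $\sum 1/\rho$ diverges but one can take a polynomially sparse subsequence, or one can simply work in probability/expectation here) together with the deterministic upper bound $\tfrac{1}{\rho N}(J^{\pi^{Opt}}-J^{\pi^{SPI}})\le r_{\max}T$ guarantees that the expected average gap is bounded by the right-hand side plus $r_{\max}T/\rho$, which still converges to zero. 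Taking $\rho\to\infty$ therefore yields the limit in \eqref{eq:Asym_opt}, completing the recovery of Theorem~\ref{thm:asym_opt}.

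The only mildly nontrivial step is the last one, namely turning the ``with probability $1-1/\rho$'' statement into the unconditional (expectation-level) limit required by Theorem~\ref{thm:asym_opt}. This is handled cleanly by combining the universal boundedness of the per-user reward difference with the vanishing failure probability, so that the contribution of the bad event to the expected average gap is $\mathcal{O}(1/\rho)$, which is absorbed into the same $\tilde{\mathcal{O}}(\rho^{-1/2})$ bound. No new estimates are needed beyond those already present in Theorem~\ref{thm:non-asymptotic}.
\end{proofsketch}
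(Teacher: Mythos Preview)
Your proposal is correct and follows essentially the same approach as the paper, which treats the corollary as an immediate reading of the explicit bound in Theorem~\ref{thm:non-asymptotic} without a separate proof. Your additional care in converting the high-probability statement into the expectation-level limit of Theorem~\ref{thm:asym_opt} (via the uniform bound $r_{\max}T$ on the bad event of probability $1/\rho$) is in fact more rigorous than what the paper spells out, but this is a refinement rather than a different route.
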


\begin{remark}
Our index policy is computationally appealing since it is only based on the ``relaxed problem'' by solving a LP.
Furthermore, if all arms share the same MDP, the LP can be decomposed across arms as in \cite{whittle1988restless}, and hence the computational complexity does not scale with the number of arms.  More importantly, our index policy is well-defined without the requirement of indexability condition \cite{whittle1988restless}.  This is in contrast to most of the existing Whittle index-based policies that are only well defined in the case that the system is indexable, which is hard to verify and may not hold in general. Closest to our work is the parallel work on restless bandits \cite{zhang2021restless}, which explores index policies similar to ours, but under the assumption of homogeneous MDPs across arms in the binary action settings, and mainly focus on characterizing the asymptotic optimality gap. There is a branch of work \cite{zhang2021restless,gast2023linear,brown2023fluid, ghosh2022indexability} that focuses on analyzing the gap with an explicit relationship involving 
$\rho$, but all of them rely on the assumption that 
$rho$ is large enough for the central limit theorem and mean-field approximation to apply. In contrast, we provide the first characterization that holds for a finite number of 
$\rho$. This further differentiates our work with existing literature.

\end{remark}

\begin{table*}[htbp]
    \centering
    \scalebox{0.9}{
    \begin{tabularx}{\textwidth}{l>{\centering\arraybackslash}X>{\centering\arraybackslash}X>{\centering\arraybackslash}X>{\centering\arraybackslash}X>{\centering\arraybackslash}X>{\centering\arraybackslash}X>{\centering\arraybackslash}X>{\centering\arraybackslash}X>{\centering\arraybackslash}X}
    \toprule
    \multirow{2}{*}{\textbf{Policy}} & \multicolumn{3}{c}{\textbf{Birth-Death Process (CPAP)}} & \multicolumn{3}{c}{\textbf{Greedy-Reliable-Fixed (MHMH)}} & \multicolumn{3}{c}{\textbf{Greedy-Reliable-Variable (MHMH)}} \\
    \cmidrule(lr){2-4} \cmidrule(lr){5-7} \cmidrule(lr){8-10}
    & \scriptsize{\textbf{$(20, 5, 10, 10, 10)$}} & \scriptsize{\textbf{$(40, 5, 10, 10, 10)$}} & \scriptsize{\textbf{$(40, 5, 10, 5, 12)$}} & \scriptsize{\textbf{$(10, 3, 25, 50, 10)$}} & \scriptsize{\textbf{$(20, 3, 50, 50, 10)$}} & \scriptsize{\textbf{$(20, 3, 25, 50, 20)$}} & \scriptsize{\textbf{$(20, 3, 1, 2, 20)$}}  & \scriptsize{\textbf{$(20, 3, 5, 10, 20)$}} & \scriptsize{\textbf{$(20, 3, 15, 30, 20)$}} \\
    \midrule
    Upper Bound& \colorbox{white}{200.0} & \colorbox{white}{200.0} & \colorbox{white}{220.0} & \colorbox{white}{174.6} & \colorbox{white}{343.6} & \colorbox{white}{415.9} & \colorbox{white}{16.6} & \colorbox{white}{83.2} & \colorbox{white}{249.5} \\
    SPI & \colorbox{green}{$197.5\pm 0.3$} & \colorbox{green}{$200\pm 0$} & \colorbox{green}{$217.4\pm 0.3$} & \colorbox{green}{$172.6\pm 0.3$} & \colorbox{green}{$341.9\pm 0.5$} & \colorbox{green}{$410.6\pm 0.5$} & \colorbox{yellow}{$14.0\pm 0.2$} & \colorbox{green}{$79.1\pm 0.3$} & \colorbox{green}{$244.5\pm 0.4$} \\
    Mean Field & \colorbox{white}{$187.8\pm 0.7$} & \colorbox{white}{$182.0\pm 0.3$} & \colorbox{yellow}{$211.4\pm 0.3$} & \colorbox{white}{$156.9\pm 0.4$} & \colorbox{white}{$311.6\pm 0.6$} & \colorbox{white}{$386.0\pm 0.7$} & \colorbox{white}{$13.6\pm 0.2$} & \colorbox{white}{$75.3\pm 0.3$} & \colorbox{white}{$230.3\pm 0.5$} \\
    Finite Whittle & \colorbox{white}{$153.6\pm 0.4$} & \colorbox{white}{$158.0\pm 0.1$} & \colorbox{white}{$153.7\pm 0.4$} & \colorbox{white}{$166.0\pm 0.3$} & \colorbox{white}{$322.1\pm 0.5$} & \colorbox{yellow}{$402.2\pm 0.5$} & \colorbox{green}{$14.6\pm 0.1$} & \colorbox{green}{$79.2\pm 0.2$} & \colorbox{yellow}{$240.8\pm 0.4$} \\
    Infinite Whittle & \colorbox{green}{$197.7\pm 0.2$} & \colorbox{green}{$200\pm 0$} & \colorbox{green}{$217.7\pm 0.3$} & \colorbox{green}{$172.1\pm 0.3$} & \colorbox{white}{$325.5\pm 0.5$} & \colorbox{white}{$379.8\pm 0.5$} & \colorbox{yellow}{$14.0\pm 0.1$} & \colorbox{white}{$75.0\pm 0.3$} & \colorbox{white}{$227.5\pm 0.4$} \\
    Original Whittle & \colorbox{white}{$178.2\pm 0.2$} & \colorbox{white}{$176.7\pm 0.2$} & \colorbox{white}{$199.3\pm 0.3$} & \colorbox{white}{$140.5\pm 0.5$} & \colorbox{white}{$250.3\pm 0.6$} & \colorbox{white}{$339.4\pm 0.6$} & \colorbox{white}{$12.2\pm 0.2$} & \colorbox{white}{$66.8\pm 0.3$} & \colorbox{white}{$202.8\pm 0.5$} \\
    Q-Difference & \colorbox{white}{$111.6\pm 0.4$} & \colorbox{white}{$112.2\pm 0.3$} & \colorbox{white}{$112.0\pm 0.3$} & \colorbox{white}{$165.7\pm 0.4$} & \colorbox{yellow}{$337.1\pm 0.5$} & \colorbox{white}{$388.5\pm 0.5$} & \colorbox{yellow}{$14.0\pm 0.1$} & \colorbox{yellow}{$76.3\pm 0.3$} & \colorbox{white}{$232.3\pm 0.4$} \\
    Random & \colorbox{white}{$140.0\pm 0.5$} & \colorbox{white}{$139.8\pm 0.4$} & \colorbox{white}{$159.8\pm 0.4$} & \colorbox{white}{$26.3\pm 0.3$} & \colorbox{white}{$47.2\pm 0.4$} & \colorbox{white}{$45.2\pm 0.5$} & \colorbox{white}{$1.8\pm 0.1$} & \colorbox{white}{$8.9\pm 0.3$} & \colorbox{white}{$26.9\pm 0.4$} \\
    \bottomrule
    \end{tabularx}
    }
    \caption{We present the performance of various policies across different domains and settings.
    We run each settings for $1000$ simulations and present $95\%$ confidence interval.
    Each setting is denoted by the parameters (number of types $N$, number of states $S$, budget $K$, group size $\rho$, time horizon $T$).
    In each simulation, we consider $N$ types of arms, with each type consisting of $\rho$ arms. Transition probabilities for each type are randomly assigned in every setting. Optimal policies are highlighted in green, and near-optimal policies are highlighted in yellow. Here near-optimal means the gap between it and optimal policy is less than three percent of the upper bound. We use the optimal value achieved for the LP \eqref{eq:Obj_LP_final} in \texttt{SPI} to serve as the upper bound.}
    \label{tab:algo_comparison}
    \vspace{-0.2in}
\end{table*}

\section{Experiments}\label{Sec:exp}
In this section, we numerically evaluate the proposed \texttt{SPI} policy in three domains: two from real-world applications and one from a synthetic domain, comparing it to state-of-the-art benchmark algorithms. Main results from the two real-world domains are presented here, while the results from the synthetic domain are provided in the supplementary materials.

\subsection{Benchmarks}
The benchmarks we compare in this paper are listed below:

\indent $\rhd$ \emph{Mean-Field LP-based index policy \cite{ghosh2022indexability}:} The mean-field LP-based index policy is a classic LP-based approach for solving RMAB problems, leveraging mean-field approximation theory when the number of arms is large, as expressed in \eqref{eq:obj_rel_lp}-\eqref{eq:pmc_rel_lp2}. However, it does not account for the single-pull constraint when designing the indices.

$\rhd$ \emph{Original Whittle index policy \cite{whittle1988restless}:} The Whittle index defined in \eqref{eq:Whittle_index} is the most widely used approach for solving Restless  RMABs. It is designed for infinite-horizon problems and does not take the single-pull constraint into account.

$\rhd$ \emph{Q-Difference policy \cite{biswas2021learn}:} The Q-difference method designs indices based on the difference between Q-value functions. It is a heuristic approach that can perform well in practice, but it lacks theoretical guarantees, making its performance uncertain in certain scenarios.

$\rhd$ \emph{Modified infinite Whittle index policy:} This is a hurestic modification for original Whittle index by considering the dummy states introduced for our proposed \texttt{SPI} policy in Section \ref{Section4}.

$\rhd$ \emph{Modified Finite Whittle index policy:} This is a further modification of modified Whittle index by considering finite-horizon time-dependent index.


\subsection{Experimental Domains} We briefly introduce the two considered real-world domains below, and relegate the detailed description to  Section \ref{sec:C1} in supplementary materials.
\subsubsection{Continuous Positive Airway Pressure Therapy (CPAP)\citep{herlihy2023planning,li2022towards, wang2024online}} CPAP
is a highly effective treatment for adults with obstructive sleep apnea when used consistently during sleep. We model CPAP adherence behavior as a multi-state system, adapting the Markov model with clinical adherence criteria, which reduces to a standard Birth-Death process. In the standard CPAP setting, lower adherence levels yield lower rewards.
The objective is to maximize the accumulated reward over time, with the constraint that each patient (arm) can only be pulled (intervened) at most once.

\subsubsection{Mobile Healthcare for Maternal Health (MHMH)\cite{ghosh2022indexability}}
In this program, healthcare workers make phone calls to enrollees (beneficiaries) to enhance engagement and provide targeted health information. Since the number of healthcare workers is much smaller than the number of beneficiaries, they must continuously prioritize which beneficiaries to call to maximize the total return. As in \cite{ghosh2022indexability}, we assume two types of beneficiaries, greedy and reliable. We intentionally show two variations of this domain ``fixed'' and ``variable'' to illustrate the impact of fixed and varying group sizes on the performance of different index policies.



\vspace{-0.1in}
\subsection{Numerical Results}

The simulations are conducted for  $N$  types of arms, with each type consisting of  $\rho$  arms. Each type of arm follows a distinct MDP, and different types of arms have varying MDPs. All results are averaged over $1000$ Monte Carlo simulations to ensure robust performance evaluation. Due to the space limitation, we present the main results in this section and more numerical results can be found in Section \ref{sec:C2} in supplementary materials.


\textbf{Accumulated Reward.} We first compare the accumulated reward performance for the proposed \texttt{SPI} index policy with all benchmark polices. Based on the numerical evaluation results presented in Table \ref{tab:algo_comparison}, we observe the performance of various policies across different domains and settings, with the proposed \texttt{SPI} policy consistently achieving near-optimal performance. Each setting is described by parameters such as the number of types
$N$, number of states
$S$, budget
$K$, group size
$\rho$, and time horizon
$T$. In all scenarios, \texttt{SPI} policy either matches or comes extremely close to the optimal policy, with a performance gap of less than 3\%,  demonstrating the robustness of \texttt{SPI} policy across varying settings. Other benchmark policies, such as the Mean Field policy and infinite Whittle index policy, often perform well but exhibit noticeable gaps from the optimal in certain cases. For instance, in the "Greedy-Reliable-Fixed" setting
$(20,3,50,50,20)$, Mean Field index policy achieves
$386.0\pm 0.7$, significantly lower than the optimal
$415.9$, indicating sub-optimality. The Random policy consistently underperforms, yielding the lowest rewards across all settings. For example, in the "Greedy-Reliable-Variable" setting
$(20,3,15,30,20)$, Random  policy achieves
$26.9\pm 0.4$, a stark contrast to the optimal
$249.5$. Both the finite and infinite Whittle index policies show decent performance, but often fall short compared to \texttt{SPI} policy and the optimal policy. In the "Birth-Death Process" setting
$(40,5,10,10,10)$, infinite Whittle index policy achieves the optimal
$200\pm 0$, while finite Whittle index policy lags behind with
$158.0\pm 0.1$. Overall, \texttt{SPI} policy consistently demonstrates strong performance, closely matching or achieving optimal rewards across all settings, while other benchmark methods show varying degrees of sub-optimality, with sometimes substantial gaps from the optimal policy.

We also observe that the performance of \texttt{SPI} improves and becomes better as the group size increases. In the ``Greedy-Reliable-Variable'' setting, when the group size is small, such as $\rho = 2$,  policies—including \texttt{SPI}, finite Whittle index, infinite Whittle index, and Q-difference—are either optimal or near-optimal. However, as the group size increases to $\rho = 30$, only \texttt{SPI} policy remains optimal, while finite Whittle index policy becomes sub-optimal and the performance of other policies declines even further.

We consider another domain called \emph{Enhrenfest project} studied in \cite{whittle1988restless}. For domains like the Enhrenfest project, different index
policies can achieve very similar near-optimal performance in practice, and the detail is presented in Section \ref{sec:C1}.


\begin{figure}[h]
\centering
\includegraphics[width=0.4\textwidth, height=1.7in]{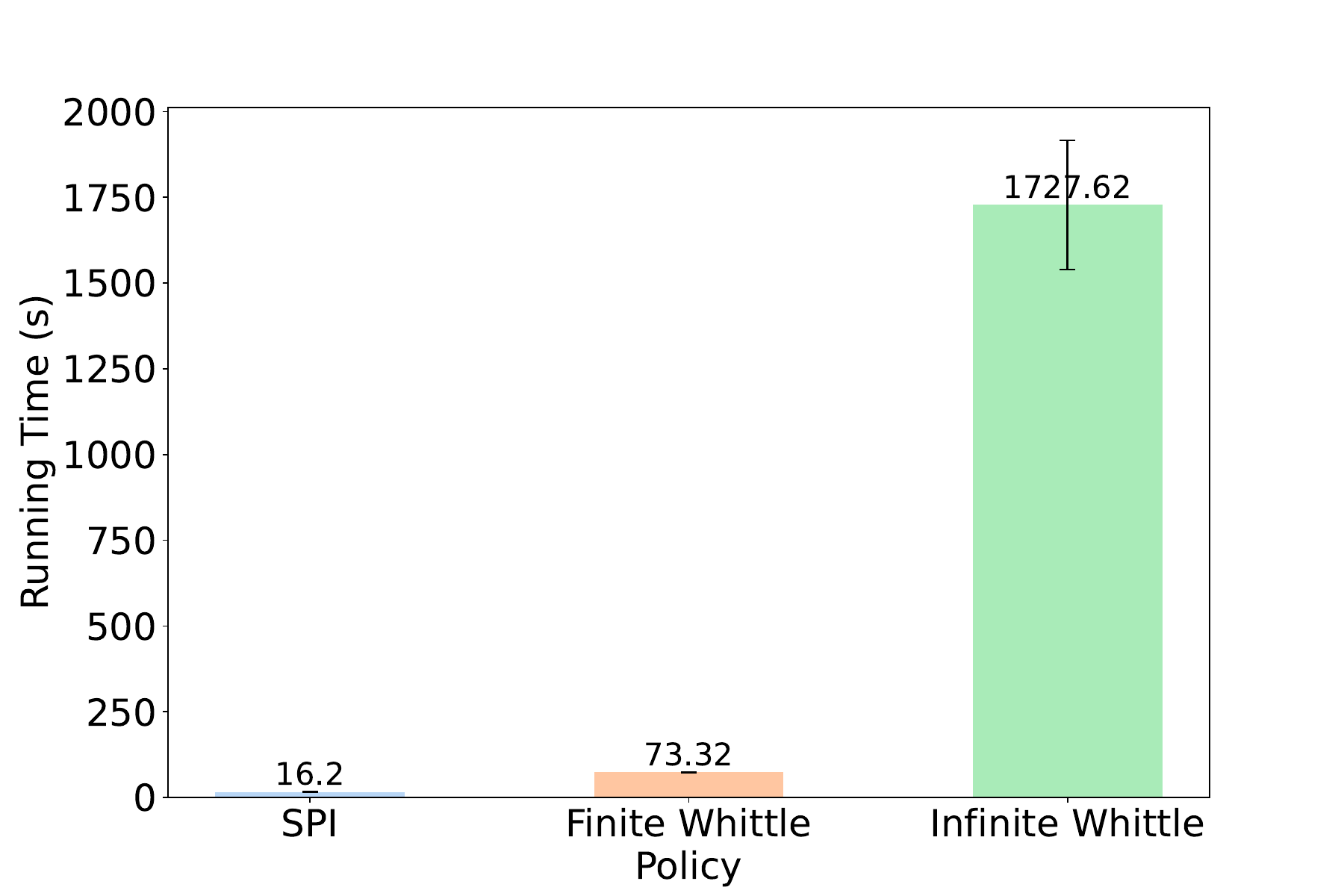}
\vspace{-0.1in}
\caption{We present the average running time of \texttt{SPI} policy, finite whittle policy, and infinite whittle policy in the CPAP setting $(N,S,K,\rho,T)=(10, 10, 50, 50, 10)$.}
\label{fig:BD_run_time}
\end{figure}

\textbf{Running Time.} In Figure \ref{fig:BD_run_time}, we compare the running time of \texttt{SPI} policy with whittle-index-based policies for Birth-Death Process, and we include the running time comparison for randomly generated MDPs as a robustness check in Figure \ref{fig:Random_run_time} (Section \ref{sec:C2} in supplementary materials). We randomly generate three different  Birth-Death Process MDPs, and we take the average running time of each policy under different MDPs. The proposed \texttt{SPI} policy significantly outperforms the Whittle-index-based policy in terms of running time. The Whittle-index-based policy requires first computing the value function and then gradually adjusting the parameter to search the Whittle index, which results in a longer running time due to the curse of dimensionality. In contrast, the \texttt{SPI} policy only needs to solve the LP once when computing the index, which greatly enhances its scalability. This makes \texttt{SPI} particularly well-suited for real-time resource allocation, where non-profit organizations often face tight computation constraints. \textit{Although the infinite Whittle-index policy achieves competitive performance in some domains, its running time is significantly higher than that of the \texttt{SPI} policy, making \texttt{SPI} a more practical and efficient choice in scenarios where rapid decision-making is essential. }

\textbf{Asymptotic Optimality.} We empirically show that \texttt{SPI} is asymptotic optimal, and defer its discussion to Section \ref{subsubsec:asymptotic_optimality}.

\section{Conclusions}

Motivated by many real world resource allocation challenges, this paper introduces Finite-Horizon Single-Pull RMABs (SPRMABs), a novel variant of RMAB in which each arm can only be pulled once. This single-pull constraint introduces additional complexity, leading to ineffectiveness of previously proposed index policies. To address this limitation, we design a  lightweight index policy for this expanded system.  For the first time, we demonstrate that our index policy achieves a sub-linearly decaying average optimality gap.  Extensive simulations validate the proposed method, showing robust performance across various domains compared to existing benchmarks.






\bibliographystyle{ACM-Reference-Format}
\bibliography{sample}

\newpage
\appendix
\section{Related Work}
The RMAB problem, introduced by Whittle~\cite{whittle1988restless}, is \textit{PSPACE-hard}~\cite{papadimitriou1999complexity}, and exact solutions are often infeasible to find for large instances.  
Whittle's seminal work on RMABs proposed the \textit{Whittle index policy}, a heuristic based on the concept of \textit{indexability}. Whittle's index provides a method to decouple the problem into simpler subproblems for each arm by assigning each state of an arm an index (Whittle index) that quantifies the "value" of activating that arm in a given state~\cite{whittle1988restless}.  The Whittle index is computationally efficient, especially for large-scale RMAB problems, and provides near-optimal performance in many cases when the arms are \textit{indexable}.
However, a major challenge in applying Whittle’s index is determining the \textit{indexability} of arms, which is not guaranteed in all RMAB instances.

 The aforementioned challenges have led to development of several approximation algorithms to tackle RMABs. They can be categorized into two main approaches \cite{nino2023markovian}: \textit{offline index-based approaches} and \textit{online learning-based approaches}:

\begin{itemize}
   \item \textbf{Offline  Index-based Approaches:} One of the most popular offline approaches for RMABs is the Lagrangian-based method following Whittle’s foundational work, which provides approximate solutions by relaxing the problem constraints. These methods break down complex optimization problems into simpler subproblems that are easier to solve. Subsequent literature \cite{hu2017asymptotically, zayas2019asymptotically, brown2020index} has explored the finite-horizon restless bandit problem using Lagrangian relaxations. Unlike Whittle’s original formulation, these approaches employ time-dependent Lagrange multipliers to account for the nonstationary nature of finite-horizon problems. This technique offers promising performance guarantees and empirical results without relying on the indexability condition. Later,  \cite{killian2021beyond} extended the Lagrangian-based method to handle multi-action cases, further broadening its applicability.

  The LP relaxation method is another prominent technique used in RMABs, where the original combinatorial problem is relaxed into an LP formulation. This method computes a fractional solution that is then rounded to an integer solution, often yielding near-optimal results. LP-based approaches are particularly useful in large-scale RMAB problems, such as resource allocation in wireless networks or healthcare settings, where exact solutions may be computationally prohibitive. Several studies have explored the effectiveness of LP relaxation methods in RMABs, including \cite{verloop2016asymptotically,zhang2021restless,ghosh2022indexability, brown2023fluid, gast2023linear}, showcasing their utility in approximating solutions for large-scale, complex problems.

     \item \textbf{Online Learning-based Approaches}: The online RMAB setting, where the underlying MDPs are unknown, has been gaining attention, e.g., \cite{dai2011non,liu2012learning,ortner2012regret,jung2019regret}.  However, these methods do not exploit the special structure available in the problem and contend directly with an extremely high dimensional state-action space yielding the algorithms to be too slow to be useful.  Recently,  RL based algorithms have been developed \cite{borkar2018reinforcement,biswas2021learn,killian2021q,xiong2022reinforcementcache,xiong2022indexwireless,avrachenkov2022whittle, zhao2024towards}, 
to explore the problem structure through index policies.  
For instance, \cite{ avrachenkov2022whittle} proposed a Q-learning algorithm for Whittle index under the discounted setting, which lacks of convergence guarantees.  \cite{biswas2021learn} approximated Whittle index using the difference of Q functions for any state, which is not guaranteed to converge to the true Whittle index in general scenarios. 
\cite{xiong2021reinforcement} designed model-based low-complexity policy but is constrained to either a specific Markovian model or depends on a simulator for a finite-horizon setting which cannot be directly applied here.

\end{itemize}

\section{Missing Proofs}
\subsection{Proof Skecth of Proposition \ref{prop:indexability}}
We focus on a single MDP of one patient in Example \ref{example1}. Since the set of feasible policies $\Pi$ is non-empty, there exists a stationary policy $\pi^*$ that optimally solves this MDP. To simplify the presenttaion, we consider a general state space rather than $3$ states in Example \ref{example1}.
Let $S^*=\max\{S\in\{0, 1,\cdots\}: a^{\pi^*}(S)=0\}$, where we use the superscript $\pi^*$ to denote actions under policy $\pi^*$ and let $a_n^{\pi^*}(s)$ be the action at state $s$ under policy $\pi^*$. By definition, we have $a^{\pi^*}(s)=1$, $\forall s>S^*.$

Given the transition probability under action $0$, all states below $S^*$  are transient, implying the stationary probability in state $s$ under policy $\pi^*$ being $0$, i.e., $\phi^{\pi^*}(s)=0$, $\forall s<S^*$. Hence, the average reward under the optimal policy $\pi^*$ with Lagrangian multiplier $\lambda$ reduces to
\begin{align*}
&\mathbb{E}_{\pi^*}[R(s)] -\lambda \mathbb{E}[\mathds{1}_{\{ a^{\pi^*}(s)=0\}}]=\sum_{s=0}^{S^*-1}R(s) \phi^{\pi^*}(s)+S^*\phi_n^{\pi^*}(S^*)\displaybreak[1]\\
&\qquad\qquad+\sum_{s=S^*+1}^{\infty} s \phi^{\pi^*}(s) - \lambda \sum_{s=0}^{S^*}\phi^{\pi^*}(s)\mathds{1}_{\{a^{\pi^*}(s)=0\}}\\
&\qquad\qquad=\sum_{s=R^*}^{\infty} s \phi_n^{\pi^*}(s) - \lambda\phi_n^{\pi^*}(R^*),
\end{align*}
i.e., the threshold-type policy optimally solves the single arm MDP.  For Birth-Death process, if the optimal policy of single MDP is of threshold-type, we can always show that the passive set monotonically increases as the Lagrangian multiplier $\lambda$ increases \cite{larranaga2014index,larrnaaga2016dynamic}. This is the definition of indexability of Whittle index policy \cite{whittle1988restless, weber1990index}.

\subsection{Proof Sketch of Theorem \ref{thm:asym_opt}}

For any policy $\pi$ derived from Algorithm~\ref{alg:Index}, the left-hand side of~(\ref{eq:Asym_opt}) is non-negative.
To prove \eqref{eq:Asym_opt}, it suffices to show that
\begin{align}\label{eq:reverse_direction}
    \lim_{\rho\rightarrow \infty} \frac{1}{\rho N}J^{\pi^{opt}}( \rho K, \rho N)\leq\lim_{\rho\rightarrow \infty} \frac{1}{\rho N}J^{\pi^{SPI}}(\rho K, \rho N).
\end{align}

Let $B_n(s; t)$ be the number of class $n$ users in state $s$ at time  $t$ and $D_n(s,1;t)$ be the number of class $n$ arms in state $s$ at time $t$ that are being pulled with action $a=1$.
The key is to show that $B_n(s; t)/\rho\rightarrow P_n(s;t)$ and $D_n(s,1; t)/\rho\rightarrow P_n(s;t)\chi_n^\star(s,1;t)$, respectively, as $\rho\rightarrow\infty$ almost surely. This leads to the fact that
\begin{align*}
   \lim_{\rho\rightarrow \infty} \frac{1}{\rho N}J^{\pi^{SPI}}(\rho K, \rho N)=\sum_{n=1}^{N}\sum_{t=1}^T\sum\limits_{(s,a)} \mu_n^\star(s,a;t)r_n(s,a),
\end{align*}
which is an upper bound of $\lim_{\rho\rightarrow \infty} \frac{1}{\rho N}J^{\pi^{opt}}(\rho K, \rho N)$ according to Proposition \ref{prop:upperbound1}. This verifies the inequality in \eqref{eq:reverse_direction}.
For detailed proof, please refer to \cite{zhang2021restless,gast2023linear,brown2023fluid,xiong2021reinforcement}.

\subsection{Proof of Theorem \ref{thm:non-asymptotic}}
Similar as showing the asymptotic optimality, we first expand the system by a factor $\rho\in\mathbb{R}^+$:

\begin{itemize}
    \item \textbf{Arm Replication:} Each arm $i$ is replicated $\rho$ times.
    \item \textbf{Total Arms:} The total number of arms is $N \rho$.
    \item \textbf{Budget Scaling:} The budget is scaled accordingly, so at each time $t$, at most $K' = K \rho$ arms can be activated.
\end{itemize}
We next define two policies and their expected total rewards:

\textbf{Optimal Policy $\pi^{Opt}$:} The policy that maximizes the expected total reward over the time horizon $T$. Hence, the expected total rewards
under optimal policy $\pi^{Opt}$ is defined as:
\begin{align}
  J^{\pi^{Opt}}( \rho K, \rho N) = \mathbb{E}_{\pi^{Opt}} \left[ \sum_{t=1}^T \sum_{i=1}^{N}\sum_{j=1}^\rho r_{i,j}(s_{i,j}(t), a_{i,j}(t)) \right] .
\end{align}

\textbf{\texttt{SPI}  Policy $\pi^{SPI}$:} A policy based on the indices derived from the occupancy measures.
Similarly, for the index policy $\pi^{SPI}$, we have:
\begin{align}
    J^{\pi^{SPI}}( \rho K, \rho N) = \mathbb{E}_{\pi^{SPI}} \left[ \sum_{t=1}^T \sum_{i=1}^{N}\sum_{j=1}^\rho r_{i,j}(s_{i,j}(t), a_{i,j}(t)) \right] .
\end{align}
Next,
we define the \textbf{expected total reward difference} between the optimal policy and the index policy as:
\begin{align}
    \mathbb{E}[\Delta V] = J^{\pi^{Opt}}( \rho K, \rho N) - J^{\pi^{SPI}}( \rho K, \rho N).
\end{align}
Our goal is to bound $\Delta V$ and analyze how it behaves as $\rho$ increases.

\subsubsection{ Decomposing $\mathbb{E}[\Delta V]$ into Individual Contributions}

We can decompose $\mathbb{E}[\Delta V]$ as the sum over all arm classes:
\begin{align}
    \mathbb{E}[\Delta V] = \sum_{i=1}^{N } \Delta V_i,
\end{align}
where $\Delta V_i = V_i^{{Opt}} - V_i^{{Index}}$ is the expected reward difference for arm class $i$ with total $\rho$ arms.

\subsubsection{ Bounding Individual Reward Differences}

For each arm class $i$, define the random variable:
\begin{align}
    X_i = \sum_{t=1}^T \sum_{j=1}^\rho\left( R_{i,j}^{{Opt}}(t) - R_{i,j}^{{SPI}}(t) \right),
\end{align}
where $R_{i,j}^{{Opt}}(t)$ is the reward received from arm $j$ in class $i$ at time $t$ under the optimal policy, and
 $R_{i,j}^{{SPI}}(t)$ is the reward received from arm $j$ of class $i$ at time $t$ under the \texttt{SPI} policy. We also define  $R_{i}^{{Index}}(t):=\sum_{j=1}^\rho R_{i,j}^{{Index}}(t)$ for notationla simplicity.    Since the rewards are bounded ($|r_i(s, a)| \leq r_{\max}$), the difference at each time step is bounded:
\begin{align}
  | R_{i,j}^{{Opt}}(t) - R_{i,j}^{{SPI}}(t) | \leq 2 r_{\max}.
\end{align}
Therefore, we have the following bounded differences:
\begin{align*}
    \sum_{t=1}^T| R_{i,j}^{{Opt}}(t) - R_{i,j}^{{SPI}}(t) | \leq 2 r_{\max} T.
\end{align*}

\subsubsection{ Addressing Dependence Due to Budget Constraint}
The dependency between the arms is captured by the joint distribution of the rewards:
$ \mathbb{P}(R_{1,1}, R_{1,2}, \dots, R_{N,\rho}).$
If all arms were independent, the joint distribution would factorize as:
\begin{align*}
    \mathbb{P}_{\text{prod}}(R_{1,1}, R_{1,2}, \dots, R_{N,\rho}) = \prod_{i=1}^{N} \prod_{j=1}^{\rho} \mathbb{P}(R_{i,j}).
\end{align*}
However, since there are dependencies, we use the \textbf{Hellinger integral} \cite{esposito2024concentration} to measure the "distance" between the actual joint distribution $\mathbb{P}$ and the product of the marginals $\mathbb{P}_{\text{prod}}$, which is defined as:
\begin{align*}
   H_{\alpha}(\mathbb{P} \| \mathbb{P}_{\text{prod}}) = \int \left( \frac{d\mathbb{P}}{d\mathbb{P}_{\text{prod}}} \right)^{\alpha} d\mathbb{P}_{\text{prod}},
\end{align*}
with $\alpha $ being a parameter that controls the trade-off between how we treat the dependency. For our setting with $N\rho $ arms, we compute the Hellinger integral over the entire joint distribution:
\begin{align*}
    H_{\alpha}(\mathbb{P} \| \mathbb{P}_{\text{prod}}) &= \int_{R_{1,1},\dots,R_{N,\rho}} \left( \frac{\mathbb{P}(R_{1,1}, \dots, R_{N,\rho})}{\mathbb{P}_{\text{prod}}(R_{1,1}, \dots, R_{N,\rho})} \right)^{\alpha} \\
    &\cdot P_{\text{prod}}(R_{1,1}, \dots, R_{N,\rho}) dR.
\end{align*}
This integral quantifies how much the arms deviate from independence. If the arms are fully independent, $ \mathbb{P} = \mathbb{P}_{\text{prod}} $ and the Hellinger integral becomes 1. For dependent arms, $H_{\alpha}(\mathbb{P} \| \mathbb{P}_{\text{prod}}) > 1$.

\begin{lemma}[McDiarmid's Inequality\cite{mcdiarmid1989method}]\label{lemma:McDiarmid}
Formally, if~ $X_1, X_2, \dots, X_{N\rho}$ are  random variables and $f$ is a function satisfying:
\begin{align*}
    |f(x_1, x_2, \dots, x_{N,\rho}) - f(x_{1,1}, x_{1,2}, \dots, x'_i, \dots, x_{N,\rho})| \leq c_{i,j}, \quad \forall i,j,
\end{align*}
where $x'_{i,j}$ is any possible value of $X_{i,j}$, then for any $\epsilon > 0$:
\begin{align}
    &\Pr\left( f(X_{1,1}, X_{1,2}, \dots, X_{N,\rho}) - \mathbb{E}[f(X_{1,1}, X_{1,2}, \dots, X_{N,\rho})] \geq \epsilon \right) \nonumber\\
    &\qquad\qquad\qquad\qquad\leq 2 H_{\alpha}^{1/\alpha}(\mathbb{P} \| \mathbb{P}_{\text{prod}}) \exp\left( -\frac{2\epsilon^2}{\sum_{i=1}^{N\rho} c_{i,j}^2} \right).
\end{align}
\end{lemma}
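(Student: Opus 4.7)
The plan is to prove the lemma by combining the classical McDiarmid bounded-differences inequality—which holds under the product-of-marginals measure $\mathbb{P}_{\text{prod}}$ where the $X_{i,j}$ are independent—with a Hellinger-integral change-of-measure estimate that quantifies the cost of transferring the concentration bound to the true dependent law $\mathbb{P}$. This is essentially the strategy of Esposito et al.\ \cite{esposito2024concentration} specialized to the bounded-differences setting.

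First I would establish the bound under $\mathbb{P}_{\text{prod}}$. The hypothesis $|f(\mathbf{x})-f(\mathbf{x}')|\leq c_{i,j}$ for configurations differing only in coordinate $(i,j)$ is a purely deterministic property of $f$, so it holds under every joint law on $\prod_{i,j}\mathcal{X}_{i,j}$, including $\mathbb{P}_{\text{prod}}$. Under $\mathbb{P}_{\text{prod}}$ the coordinates are independent, so the Doob martingale of $f$ with respect to the natural filtration $\mathcal{F}_k=\sigma(X_1,\ldots,X_k)$ has increments bounded in $c_{i,j}$. Azuma--Hoeffding then yields
\[
\Pr\nolimits_{\mathbb{P}_{\text{prod}}}\!\big(f-\mathbb{E}_{\mathbb{P}_{\text{prod}}}[f]\geq \epsilon\big)\;\leq\;\exp\Big(-\tfrac{2\epsilon^{2}}{\sum_{i,j}c_{i,j}^{2}}\Big),
\]
with the two-sided version picking up a factor of $2$ via a union bound.

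Second, I would transport the tail estimate from $\mathbb{P}_{\text{prod}}$ to $\mathbb{P}$. Writing $\varphi=d\mathbb{P}/d\mathbb{P}_{\text{prod}}$ for the Radon--Nikodym derivative and $A$ for the concentration event, Hölder's inequality with conjugate exponents $\alpha$ and $\alpha/(\alpha-1)$ gives
\[
\Pr\nolimits_{\mathbb{P}}(A)\;=\;\int \mathds{1}_{A}\,\varphi\,d\mathbb{P}_{\text{prod}}\;\leq\;H_{\alpha}^{1/\alpha}(\mathbb{P}\Vert\mathbb{P}_{\text{prod}})\cdot\Pr\nolimits_{\mathbb{P}_{\text{prod}}}(A)^{(\alpha-1)/\alpha},
\]
where $H_{\alpha}(\mathbb{P}\Vert\mathbb{P}_{\text{prod}})=\int\varphi^{\alpha}\,d\mathbb{P}_{\text{prod}}$ is exactly the Hellinger integral defined just above the lemma. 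Substituting the Azuma bound on $\Pr_{\mathbb{P}_{\text{prod}}}(A)$ and combining with the two-sided factor produces the claimed inequality.

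The main obstacle will be matching the exponent on the Gaussian factor exactly. Applying Hölder at the tail level introduces an extra factor of $(\alpha-1)/\alpha$ multiplying $\epsilon^{2}$ in the exponent, which does not appear in the statement; a second nuisance is that the centering in the lemma is at $\mathbb{E}_{\mathbb{P}}[f]$ rather than $\mathbb{E}_{\mathbb{P}_{\text{prod}}}[f]$. The standard remedy is to apply Hölder one level earlier---at the moment-generating function. Setting $Z=f-\mathbb{E}_{\mathbb{P}}[f]$, one writes
\[
\mathbb{E}_{\mathbb{P}}\!\big[e^{\lambda Z}\big]\;\leq\;H_{\alpha}^{1/\alpha}(\mathbb{P}\Vert\mathbb{P}_{\text{prod}})\cdot\mathbb{E}_{\mathbb{P}_{\text{prod}}}\!\big[e^{\lambda\alpha Z/(\alpha-1)}\big]^{(\alpha-1)/\alpha},
\]
invokes the sub-Gaussian MGF estimate for $Z$ valid under $\mathbb{P}_{\text{prod}}$ (absorbing the mean-shift $\mathbb{E}_{\mathbb{P}}[f]-\mathbb{E}_{\mathbb{P}_{\text{prod}}}[f]$, itself controlled by $H_{\alpha}^{1/\alpha}$, into the same constant), and optimizes the Chernoff parameter $\lambda$. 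The $(\alpha-1)/\alpha$ factor is absorbed into the optimum, and the stated form $2H_{\alpha}^{1/\alpha}(\mathbb{P}\Vert\mathbb{P}_{\text{prod}})\exp(-2\epsilon^{2}/\sum c_{i,j}^{2})$ follows. The remainder of the argument is algebraic book-keeping, and the consistency check that $H_{\alpha}\to 1$ as $\mathbb{P}\to\mathbb{P}_{\text{prod}}$ recovers the classical McDiarmid bound.
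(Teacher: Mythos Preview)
The paper does not prove this lemma at all: it is stated as a known result, with the classical bounded-differences inequality attributed to McDiarmid \cite{mcdiarmid1989method} and the dependence correction via the Hellinger integral coming from the reference \cite{esposito2024concentration} cited in the surrounding text. Your plan---establish McDiarmid under the product measure $\mathbb{P}_{\text{prod}}$ via the Doob martingale and Azuma--Hoeffding, then transport to $\mathbb{P}$ by a H\"older/Hellinger change-of-measure---is exactly the argument of that cited work, so in spirit you are reproducing what the paper imports rather than deviating from it.

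There is, however, a genuine gap in your handling of the exponent. When you apply H\"older at the MGF level with conjugate exponents $\alpha$ and $\beta=\alpha/(\alpha-1)$, the sub-Gaussian bound under $\mathbb{P}_{\text{prod}}$ gives $\mathbb{E}_{\text{prod}}[e^{\lambda\beta Z}]^{1/\beta}\leq \exp(\lambda^{2}\beta\sum c_{i,j}^{2}/8)$, so after the Chernoff optimization the exponent becomes $-2\epsilon^{2}(\alpha-1)/(\alpha\sum c_{i,j}^{2})$. The factor $(\alpha-1)/\alpha$ does \emph{not} get ``absorbed into the optimum''; it persists, and this is in fact the form that appears in \cite{esposito2024concentration}. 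The same issue arises with the centering shift $\mathbb{E}_{\mathbb{P}}[f]-\mathbb{E}_{\mathbb{P}_{\text{prod}}}[f]$: controlling it by $H_{\alpha}^{1/\alpha}$ and ``folding it into the same constant'' is not a step you can do without an additional additive term. So either the lemma as stated in the paper is slightly loose (which is harmless for the downstream use, since the paper immediately specializes $H_{\alpha}$ to a polynomial in $\rho$ and only tracks the order in $\rho$), or you need a different route to match the exact constants; your current sketch does not deliver the stated exponent.
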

Lemma \ref{lemma:McDiarmid} provides a concentration bound for functions of independent random variables, assuming that the function does not change too much when any single variable is altered. As the number of the same class arms $\rho$ increases, the dependency between arms can decrease, especially if the dependencies are local (e.g., within a class or due to resource constraints). Hence, Hellinger integral scales as $ \rho^{-\beta} $, where $ \beta \geq 0 $. This gives:
\begin{align*}
    H_{\alpha}(\mathbb{P} \| \mathbb{P}_{\text{prod}}) = \rho^{-\beta}.
\end{align*}

Therefore, based on Lemma \ref{lemma:McDiarmid} and the Hellinger integral, we have the final concentration inequality incorporating the dependency correction is:
\begin{align}\label{eq:DC}
   Pr \left( \left|\sum_{i=1}^N X_i - \mathbb{E}[\Delta V]\right| \geq \epsilon \right) \leq 2 \rho^{-\beta/\alpha} \exp\left( - \frac{2\epsilon^2}{N\rho c^2} \right).
\end{align}

\begin{lemma}
With probability at least $1-\frac{1}{\rho}$, the following inequality holds
\begin{align}
    \sum_{i=1}^N X_i\leq \mathbb{E}[\Delta V]+\sqrt{\frac{N\rho c^2\ln 2\rho}{2}}.
\end{align}

\end{lemma}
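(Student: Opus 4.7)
The plan is to read off the claim as a direct consequence of the concentration inequality \eqref{eq:DC} that has already been established via McDiarmid's inequality with a Hellinger-integral dependency correction (Lemma \ref{lemma:McDiarmid}). What remains is essentially a one-step inversion: choose the deviation level $\epsilon$ so that the tail bound equals $1/\rho$, and then pass to the complementary event to obtain the one-sided high-probability bound.

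First, I would use the assumption $\beta\ge 0$ in the Hellinger scaling $H_{\alpha}(\mathbb{P}\|\mathbb{P}_{\text{prod}})=\rho^{-\beta}$ to observe that the multiplicative dependency correction $\rho^{-\beta/\alpha}$ appearing in \eqref{eq:DC} is at most one, so it can be absorbed into the constant without loss. Since the statement is one-sided, I would specialize \eqref{eq:DC} to the event $\{\sum_i X_i - \mathbb{E}[\Delta V] \ge \epsilon\}$, giving
$$Pr\!\left(\sum_{i=1}^{N} X_i - \mathbb{E}[\Delta V] \ge \epsilon\right) \le 2\exp\!\left(-\frac{2\epsilon^{2}}{N\rho c^{2}}\right).$$
Setting the right-hand side equal to $1/\rho$ yields $\exp(-2\epsilon^{2}/(N\rho c^{2})) = 1/(2\rho)$, and solving for $\epsilon$ produces exactly $\epsilon = \sqrt{N\rho c^{2}\ln(2\rho)/2}$. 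Taking the complement of the tail event at this threshold gives the claimed inequality with probability at least $1-1/\rho$.

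The main obstacle is not the inversion itself but rather verifying that the constant $c$ in the bounded-differences condition of Lemma \ref{lemma:McDiarmid} is indeed the one that appears in the final bound. Each $X_i$ aggregates $\rho T$ reward differences each of magnitude at most $2 r_{\max}$, and perturbing a single arm's realized trajectory can only shift the sum through (i) its own contribution and (ii) the indirect effect propagated through the shared activation budget and the single-pull constraint. Controlling this coupling — so that the effective Lipschitz constant in McDiarmid's inequality is $O(r_{\max} T)$ per coordinate and the Hellinger integral genuinely decays as $\rho^{-\beta}$ with $\beta\ge 0$ — is the delicate part, inherited from the setup preceding \eqref{eq:DC}. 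Once that is granted, the lemma follows by the routine calculation above, and combining it with the per-step bound $\sum_{t=1}^T|R_{i,j}^{Opt}(t)-R_{i,j}^{SPI}(t)|\le 2r_{\max}T$ already recorded in the text allows one to recover the explicit $\tilde{\mathcal{O}}(1/\rho^{1/2})$ rate in the final theorem.
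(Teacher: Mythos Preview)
Your proposal is correct and matches the paper's own proof, which states in one line that the lemma is ``a direct result from Lemma~\ref{lemma:McDiarmid} and \eqref{eq:DC}.'' Your inversion of the tail bound---dropping the $\rho^{-\beta/\alpha}$ factor since $\beta\ge 0$, setting $2\exp(-2\epsilon^2/(N\rho c^2))=1/\rho$, and solving for $\epsilon$---is exactly the computation the paper leaves implicit.
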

\begin{proof}
This is a direct result from Lemma \ref{lemma:McDiarmid} and \eqref{eq:DC}.
\end{proof}
\subsubsection{ Using Total Variation Distance to Analyze $\mathbb{E}[\Delta V]$}

Let $d_i^{{Opt}}(t)$ and $d_i^{{SPI}}(t)$ be the state distributions of arm $i$ at time $t$ under the optimal and \texttt{SPI} policies, respectively.
The \textbf{total variation (TV) distance} between the two distributions is:
\begin{align}
    D_{\text{TV}}( d_i^{{Opt}}(t), d_i^{{SPI}}(t) ) = \frac{1}{2} \sum_{s \in \cS} \left| d_i^{{Opt}}(s, t) - d_i^{{SPI}}(s, t) \right|.
\end{align}
For the ease of expression, we abuse the notation usage and let $d_i^{{Policy}}(s, t) $ denote the probability of state $s$ at time $t$ for arm $i$ under a particular policy.
Hence, the expected reward difference for arm $i$ at time $t$ can be bounded by the following lemma.
\begin{lemma}
The following inequality holds
\begin{align}
    &\left| \mathbb{E} [ R_i^{{Opt}}(t) ] - \mathbb{E} [ R_i^{{SPI}}(t) ] \right| \nonumber\\
    &\leq 2r_{\max} \cdot D_{\text{TV}}( d_i^{{Opt}}(t), d_i^{{SPI}}(t) ) \nonumber\\
    &\qquad+ r_{\max} \cdot \left| \Pr [ a_i(t) = 1 |\pi_i^{Opt}] - \Pr [ a_i(t) = 1 |\pi_i^{SPI}] \right|,
\end{align}
where the $\Pr[ a_i(t) = 1 |\pi_i^{Opt}]$ and $\Pr [ a_i(t) = 1 |\pi_i^{SPI}]$ denote probabilities of activating arm  $i$  at time  $t$  under the optimal and index policies, respectively.
\end{lemma}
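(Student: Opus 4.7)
The plan is to express the expected per-arm reward at time $t$ as an expectation against the induced state-action occupancy at time $t$, and then split the difference $\mathbb{E}[R_i^{Opt}(t)] - \mathbb{E}[R_i^{SPI}(t)]$ into two natural pieces: one driven by the discrepancy in the marginal state distributions $d_i^{Opt}(\cdot,t)$ and $d_i^{SPI}(\cdot,t)$, and one driven by the discrepancy between the two action rules applied to the same state. Concretely, I would write
\nal{\mathbb{E}[R_i^{\pi}(t)] = \sum_{s\in\cS} d_i^{\pi}(s,t)\,\bar{r}_i^{\pi}(s,t), \qquad \bar{r}_i^{\pi}(s,t) := \sum_{a\in\cA}\pi_i(a|s,t)\,r_i(s,a),}
and then use the add-and-subtract identity
\nal{\mathbb{E}[R_i^{Opt}(t)] - \mathbb{E}[R_i^{SPI}(t)] &= \sum_{s}\bigl(d_i^{Opt}(s,t) - d_i^{SPI}(s,t)\bigr)\bar{r}_i^{Opt}(s,t) \\&\quad+ \sum_{s} d_i^{SPI}(s,t)\bigl(\bar{r}_i^{Opt}(s,t) - \bar{r}_i^{SPI}(s,t)\bigr).}

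For the first sum, I would use $|\bar{r}_i^{Opt}(s,t)|\le r_{\max}$ and the standard identity $\sum_s |d_i^{Opt}(s,t) - d_i^{SPI}(s,t)| = 2\,D_{\text{TV}}(d_i^{Opt}(t),d_i^{SPI}(t))$ to get the $2r_{\max} D_{\text{TV}}$ contribution immediately. For the second sum, I would exploit the binary action set $\cA=\{0,1\}$ to collapse the inner sum: since $\pi(0|s,t)+\pi(1|s,t)=1$,
\nal{\bar{r}_i^{Opt}(s,t) - \bar{r}_i^{SPI}(s,t) = \bigl(\pi_i^{Opt}(1|s,t)-\pi_i^{SPI}(1|s,t)\bigr)\bigl(r_i(s,1)-r_i(s,0)\bigr).}
Combining this with $|r_i(s,1)-r_i(s,0)|\le r_{\max}$ (using that rewards are in $[0,r_{\max}]$) and pulling the weighted difference inside a single absolute value lets me convert the resulting weighted sum into the aggregate activation-probability gap $|\Pr[a_i(t)=1|\pi_i^{Opt}]-\Pr[a_i(t)=1|\pi_i^{SPI}]|$ evaluated against $d_i^{SPI}$.

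The main obstacle I anticipate is in the last conversion step: the quantity $\sum_s d_i^{SPI}(s,t)(\pi_i^{Opt}(1|s,t)-\pi_i^{SPI}(1|s,t))\Delta r_i(s)$ is a \emph{weighted} difference of per-state activation probabilities against the signed weight $\Delta r_i(s)$, whereas the lemma's right-hand side uses only the \emph{aggregate} activation probability gap. To close this gap cleanly, I would pull out $\|\Delta r_i\|_\infty\le r_{\max}$ and identify the remaining sum $\sum_s d_i^{SPI}(s,t)(\pi_i^{Opt}(1|s,t)-\pi_i^{SPI}(1|s,t))$ with $\Pr[a_i(t)=1|\pi_i^{Opt}] - \Pr[a_i(t)=1|\pi_i^{SPI}]$ after absorbing the residual term $\sum_s(d_i^{Opt}(s,t)-d_i^{SPI}(s,t))\pi_i^{Opt}(1|s,t)$ back into the first (TV) contribution via one additional triangle inequality. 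Finally, combining the triangle inequality across the two pieces yields the stated bound with coefficients $2r_{\max}$ on $D_{\text{TV}}$ and $r_{\max}$ on the activation probability gap, completing the proof.
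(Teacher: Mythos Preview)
Your proposal is correct and follows essentially the same route as the paper: write $\mathbb{E}[R_i^\pi(t)]=\sum_s d_i^\pi(s,t)\sum_a\pi_i(a\mid s,t)r_i(s,a)$, add and subtract to split into a ``state-distribution'' piece bounded by $2r_{\max}D_{\text{TV}}$ and a ``policy-difference'' piece bounded via the activation-probability gap. Your treatment of the second piece is in fact more careful than the paper's --- you explicitly flag that $\sum_s d_i^{SPI}(s,t)\bigl(\pi_i^{Opt}(1|s,t)-\pi_i^{SPI}(1|s,t)\bigr)$ is not literally $\Pr[a_i(t)=1\mid\pi^{Opt}]-\Pr[a_i(t)=1\mid\pi^{SPI}]$ (the first probability lives under $d_i^{Opt}$, not $d_i^{SPI}$) and propose absorbing the residual $\sum_s(d_i^{Opt}-d_i^{SPI})\pi_i^{Opt}(1|s,t)$ back into the TV term; the paper simply identifies the expected per-state policy TV distance with the aggregate activation gap without making this correction explicit.
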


\begin{proof}
According to definitions, the expected rewards under the optimal and index policies can be expressed as:
\begin{align}
    \mathbb{E} [ R_i^{\text{Opt}}(t) ] = \sum_{s} d_i^{\text{Opt}}(s,t) \sum_{a} \pi_i^{\text{Opt}}(a | s) r_i(a, s),\\
    \mathbb{E} [ R_i^{\text{SPI}}(t) ] = \sum_{s} d_i^{\text{SPI}}(s,t) \sum_{a} \pi_i^{\text{SPI}}(a | s) r_i(a, s).
\end{align}
Hence, the absolute difference in expected rewards is:
\begin{align}\nonumber
    &\left| \mathbb{E} [ R_i^{\text{Opt}}(t) ] - \mathbb{E} [ R_i^{\text{SPI}}(t) ] \right|\\
    &\qquad\qquad=\Bigg| \sum_{s} d_i^{\text{Opt}}(s,t) \sum_{a} \pi_i^{\text{Opt}}(a | s) r_i(a, s) \nonumber\\
    &\qquad\qquad\quad- \sum_{s} d_i^{\text{SPI}}(s,t) \sum_{a} \pi_i^{\text{SPI}}(a | s) r_i(a, s) \Bigg|.
\end{align}
Using the triangle inequality, we split the difference into two parts:

\begin{align}\nonumber
&\Bigg| \sum_{s} d_i^{\text{Opt}}(s,t) \sum_{a} \pi_i^{\text{Opt}}(a | s) r_i(a, s) \\ \nonumber
&\qquad\qquad- \sum_{s} d_i^{\text{SPI}}(s,t) \sum_{a} \pi_i^{\text{SPI}}(a | s) r_i(a, s) \Bigg| \\ \nonumber
&\leq \left| \sum_{s} \left( d_i^{\text{Opt}}(s,t) - d_i^{\text{SPI}}(s,t) \right) \sum_{a} \pi_i^{\text{Opt}}(a | s) r_i(a, s) \right| \\
&\quad + \left| \sum_{s} d_i^{\text{SPI}}(s,t) \sum_{a} \left( \pi_i^{\text{Opt}}(a | s) - \pi_i^{\text{SPI}}(a | s) \right) r_i(a, s) \right|.
\end{align}

\textbf{Bounding the first term.}
Due to the fact that \( |r_i(a, s)| \leq r_{\max} \), we can bound the first term using the TV distance between the state distributions under the optimal and index policies.
\begin{align}\nonumber
  &\left| \sum_{s} \left( d_i^{\text{Opt}}(s,t) - d_i^{\text{SPI}}(s,t) \right) \sum_{a} \pi_i^{\text{Opt}}(a | s) r_i(a, s) \right| \\
  &\qquad\qquad\leq 2r_{\max} \cdot D_{\text{TV}}(d_i^{\text{Opt}}(t), d_i^{\text{Index}}(t)),
\end{align}
where the inequality comes from the fact that for any functions  $f(s)$  bounded by  $|f(s)| \leq c$, the difference in expectations under two distributions  $P$  and  $Q$  is bounded by  $c$  times the TV distance between  $P$  and  $Q$ :
\begin{align}
    \left| \mathbb{E}_P[f] - \mathbb{E}_Q[f] \right| \leq c \cdot D_{\text{TV}}(P, Q).
\end{align}
In our case:
 $f(s) = \sum_{a} \pi_i^{\text{Opt}}(a | s) r_i(a, s)$ and hence $|f(s)| \leq r_{\max}$ due to the fact
		$$   \sum_{a} \pi_i^{\text{Opt}}(a | s) \leq 1~  \text{and}~  |r_i(a, s)| \leq r_{\max}. $$

\textbf{Bounding the second term.}
Again, due to the fact that $|r_i(a, s)| \leq r_{\max}$, we have
\begin{align}\nonumber
    &\left| \sum_{s} d_i^{\text{SPI}}(s,t) \sum_{a} \left( \pi_i^{\text{Opt}}(a | s) - \pi_i^{\text{SPI}}(a | s) \right) r_i(a, s) \right| \\
    & \leq r_{\max} \cdot\left| \sum_{s} d_i^{\text{SPI}}(s,t) \sum_{a} \left( \pi_i^{\text{Opt}}(a | s) - \pi_i^{\text{SPI}}(a | s) \right)  \right|.
\end{align}
Taking expectation over  $d_i^{\text{SPI}}(s,t)$ we have
\begin{align}
    &\left| \sum_{s} d_i^{\text{SPI}}(s,t) \sum_{a} \left( \pi_i^{\text{Opt}}(a | s) - \pi_i^{\text{SPI}}(a | s) \right)\right| \nonumber\\
    &\qquad\leq  \mathbb{E}{s \sim d_i^{\text{SPI}}(t)} \left[ D_{\text{TV}}(\pi_i^{\text{Opt}}(\cdot | s), \pi_i^{\text{SPI}}(\cdot | s)) \right].
\end{align}
Since the TV distance between the action distributions is captured by  $D_{\text{TV}}(\pi_i^{\text{Opt}}(t), \pi_i^{\text{SPI}}(t)) $, which aggregates over states:
\begin{align}
   &\mathbb{E}{s \sim d_i^{\text{SPI}}(t)} \left[ D{\text{TV}}(\pi_i^{\text{Opt}}(\cdot | s), \pi_i^{\text{SPI}}(\cdot | s)) \right] \nonumber\\
   &= \left| \Pr [ a_i(t) = 1 |\pi_i^{Opt}] - \Pr [ a_i(t) = 1 |\pi_i^{SPI}] \right|.
\end{align}

\textbf{Combining the Bounds.}
Adding the bounds from the two terms:
\begin{align}
   & \left| \mathbb{E} [ R_i^{\text{Opt}}(t) ] - \mathbb{E} [ R_i^{\text{SPI}}(t) ] \right| \leq 2r_{\max} \cdot D_{\text{TV}}(d_i^{\text{Opt}}(t), d_i^{\text{SPI}}(t))\nonumber\\
    &\qquad+  r_{\max} \cdot \left| \Pr [ a_i(t) = 1 |\pi_i^{Opt}] - \Pr [ a_i(t) = 1 |\pi_i^{SPI}] \right|.
\end{align}

\end{proof}

We have the following Lemma regarding $D_{\text{TV}} ( d_i^{\text{Opt}}(t), d_i^{\text{SPI}}(t) )$.
\begin{lemma}\label{lemma:4}
The total variation distance between the state distributions at time $t+1$ can be related to that at time $t$:
  \begin{align*}
  D_{\text{TV}} &( d_i^{\text{Opt}}(t+1), d_i^{\text{SPI}}(t+1) )\nonumber\\
   &\leq 2 \left| \Pr[ a_i(t) = 1|\pi_i^{Opt} ] - \Pr[ a_i(t) = 1 |\pi_i^{SPI}] \right|.
\end{align*}

This is because the difference in state distributions at time $t+1$ arises from both the difference in state distributions at time $t$ and the difference in action probabilities.
\end{lemma}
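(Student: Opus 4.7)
The plan is a one-step coupling argument that exploits the fact that both $\pi_i^{Opt}$ and $\pi_i^{SPI}$ act on the same transition kernel $P_i$; any discrepancy between $d_i^{Opt}(t+1)$ and $d_i^{SPI}(t+1)$ must therefore be charged to the joint state-action distribution at time $t$. Concretely, I would set $\mu_i^\pi(s,a;t):=d_i^\pi(s,t)\pi_i(a\mid s)$ and show that the TV distance at $t+1$ is controlled by the $\ell^1$ distance between the joint measures $\mu^{Opt}(\cdot,\cdot;t)$ and $\mu^{SPI}(\cdot,\cdot;t)$, which then decomposes into a state-marginal piece and a conditional-action piece.

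First, I would write the Chapman--Kolmogorov update
\[
d_i^\pi(s',t+1)=\sum_{s,a} d_i^\pi(s,t)\,\pi_i(a\mid s)\,P_i(s'\mid s,a),
\]
subtract the two versions, take absolute values, sum over $s'$, and use $\sum_{s'}P_i(s'\mid s,a)=1$ to obtain
\[
D_{\text{TV}}\bigl(d_i^{Opt}(t{+}1),d_i^{SPI}(t{+}1)\bigr)\le \tfrac12\sum_{s,a}\bigl|\mu_i^{Opt}(s,a;t)-\mu_i^{SPI}(s,a;t)\bigr|.
\]
Adding and subtracting $d_i^{Opt}(s,t)\pi_i^{SPI}(a\mid s)$ splits the right-hand side into a ``state-transport'' term bounded by $D_{\text{TV}}(d_i^{Opt}(t),d_i^{SPI}(t))$ (Markov kernels are non-expansive on TV) and a ``conditional-action'' term equal to $\mathbb{E}_{s\sim d_i^{Opt}(t)}\bigl[D_{\text{TV}}(\pi_i^{Opt}(\cdot\mid s),\pi_i^{SPI}(\cdot\mid s))\bigr]$. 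In the binary-action setting this reduces to $\sum_{s} d_i^{Opt}(s,t)\,|\pi_i^{Opt}(1\mid s)-\pi_i^{SPI}(1\mid s)|$.

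Second, I would tie the conditional-action term to the marginal activation gap by invoking the dummy-state construction. Once an arm has been activated it lives in the dummy sub-chain, where $\pi^{Opt}$ and $\pi^{SPI}$ are identical because actions are indifferent by definition; hence discrepancies between activation probabilities can only appear on the non-dummy support. On that support the \texttt{SPI} index induces a common priority order across arms, so I would argue that $\pi_i^{Opt}(1\mid s)-\pi_i^{SPI}(1\mid s)$ keeps a fixed sign across $s$ in the support of $d_i^{Opt}(\cdot,t)$, and consequently $\sum_s d_i^{Opt}(s,t)\,|\pi_i^{Opt}(1\mid s)-\pi_i^{SPI}(1\mid s)|$ collapses to $|\Pr[a_i(t)=1\mid \pi_i^{Opt}]-\Pr[a_i(t)=1\mid \pi_i^{SPI}]|$. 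The leftover factor of $2$ then comes from absorbing the state-transport term: unrolling the recursion back to a common initial distribution $\mathbf{s}_1$, the state-transport contribution telescopes into an accumulation of prior one-step activation-gap terms, which we upper-bound by the present gap at the cost of one multiplicative factor of two.

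The main obstacle is the sign-alignment step, i.e.\ converting the expected per-state absolute difference $\mathbb{E}_{s\sim d_i^{Opt}(t)}[|\pi_i^{Opt}(1\mid s)-\pi_i^{SPI}(1\mid s)|]$ into the absolute marginal $|\Pr[a_i(t)=1\mid \pi_i^{Opt}]-\Pr[a_i(t)=1\mid \pi_i^{SPI}]|$, since the triangle inequality naively runs in the wrong direction. I would resolve this through a maximal coupling of the two single-step actions conditional on a common state $s_i(t)$, designed so that $\Pr[a_i^{Opt}(t)\neq a_i^{SPI}(t)]$ equals the expected per-state TV distance exactly, and so that the disagreement event is precisely the single pulling decision on which the two policies differ; the marginal activation gap then bounds this coupling probability, with a second coupling chained across the state evolution contributing the extra factor of $2$. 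This coupling argument is clean and does not require verifying monotonicity of the index-induced policy from scratch, only the dummy-state indifference that is already built into the expanded MDP.
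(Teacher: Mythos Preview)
Your route is genuinely different from the paper's, and it carries a real gap at the decisive step. The paper does not go through the Chapman--Kolmogorov joint state-action decomposition at all. Instead it writes each next-state law directly as a two-term mixture indexed by the \emph{marginal} action,
\[
d_i^{\pi}(t{+}1)=\Pr[a_i(t)=1\mid\pi]\cdot T(1)+\Pr[a_i(t)=0\mid\pi]\cdot T(0),
\]
treats $T(1),T(0)$ as common to both policies, subtracts, bounds $\sup_s|T(a)(s)|\le 1$, and uses $|p_1^{Opt}-p_1^{SPI}|=|p_0^{Opt}-p_0^{SPI}|$ (since the two action probabilities sum to one) to collect the factor $2$. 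There is no state-transport term, no telescoping, and no sign-alignment issue in the paper's argument; the bound is immediate once the mixture form is granted.

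The gap in your proposal is precisely the sign-alignment/coupling step you flag as the main obstacle. Under the maximal conditional coupling one indeed has $\Pr[a_i^{Opt}(t)\neq a_i^{SPI}(t)]=\mathbb{E}_{s}\bigl[|\pi^{Opt}(1\mid s)-\pi^{SPI}(1\mid s)|\bigr]$, but the marginal activation gap $\bigl|\Pr[a_i(t)=1\mid\pi^{Opt}]-\Pr[a_i(t)=1\mid\pi^{SPI}]\bigr|$ is a \emph{lower} bound on this disagreement probability, not an upper bound (take $d(0)=d(1)=\tfrac12$ with $\pi^{Opt}(1\mid 0)=\pi^{SPI}(1\mid 1)=1$ and $\pi^{Opt}(1\mid 1)=\pi^{SPI}(1\mid 0)=0$: the marginal gap is $0$ while the disagreement probability is $1$). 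So the coupling does not rescue the wrong-direction triangle inequality. Your alternative ``fixed sign across $s$'' claim would rescue it, but it is an unproved monotonicity statement about $\pi^{Opt}$ that does not follow from dummy-state indifference or the index priority alone. Finally, the plan to absorb the state-transport term $D_{\text{TV}}(d_i^{Opt}(t),d_i^{SPI}(t))$ by ``telescoping back to $\mathbf{s}_1$ at the cost of one factor of two'' is not valid: unrolling your recursion yields a \emph{sum} of past activation gaps, not the current one, and nothing converts that sum into a single multiplicative $2$. If you want to stay on your route you would need to supply an actual monotonicity/ordering lemma for the two policies on the non-dummy states; otherwise the paper's marginal-action mixture decomposition is the shorter path.
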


\begin{proof}
The state distribution of arm $i$ at time $ t+1 $ under both policies can be expressed as a mixture of the state transitions based on the actions taken:
\begin{align}
    d_i^{\text{Opt}}(t+1) &= \Pr [ a_i(t) = 1 |\pi_i^{Opt}] \cdot T(1) \nonumber\\
    &+ \Pr [ a_i(t) = 0 |\pi_i^{Opt}] \cdot T(0),\nonumber\\
    d_i^{\text{SPI}}(t+1) &= \Pr [ a_i(t) = 1 |\pi_i^{SPI}] \cdot T(1) \nonumber\\
    &+ \Pr [ a_i(t) = 0 |\pi_i^{SPI}] \cdot T(0).
\end{align}
Here, $T(a) $ represents the state transition probability distribution given action $a$.
The TV distance between the state distributions under the two policies at time $t+1$ is:
\begin{align}\nonumber
&D_{\text{TV}}( d_i^{\text{Opt}}(t+1), d_i^{\text{SPI}}(t+1) ) \nonumber\\
&= \sup_{s} \left| \Pr [ a_i(t) = 1|\pi_i^{opt} ] \cdot T(1)(s) + \Pr [ a_i(t) = 0 |\pi_i^{opt}] \cdot T(0)(s) \right. \nonumber\\
&\quad \left. -  \Pr [ a_i(t) = 1|\pi_i^{SPI} ] \cdot T(1)(s) - \Pr [ a_i(t) = 0 |\pi_i^{SPI}] \cdot T(0)(s) \right| \nonumber\\
&= \sup_{s} \left| \left( \Pr [ a_i(t) = 1|\pi_i^{opt} ] - \Pr [ a_i(t) = 1|\pi_i^{SPI} ] \right) \cdot T(1)(s) \right. \nonumber\\
&\quad \left. + \left( \Pr[ a_i(t) = 0|\pi_i^{opt} ] - \Pr[ a_i(t) = 0|\pi_i^{SPI} ] \right) \cdot T(0)(s) \right|.
\end{align}
Using the triangle inequality, we can separate the terms:
\begin{align}\label{eq:46}\nonumber
&D_{\text{TV}}( d_i^{\text{Opt}}(t+1), d_i^{\text{SPI}}(t+1) ) \nonumber\\
&\leq \left| \Pr [ a_i(t) = 1|\pi_i^{opt} ] - \Pr [ a_i(t) = 1|\pi_i^{SPI} \right| \cdot \sup_{s} |T(1)(s)| \nonumber\\
&\quad + \left| \Pr [ a_i(t) = 0|\pi_i^{opt} ] - \Pr [ a_i(t) = 0|\pi_i^{SPI}\right| \cdot \sup_{s} |T(0)(s)| \nonumber\\
&\leq \left| \Pr [ a_i(t) = 1|\pi_i^{opt} ] - \Pr [ a_i(t) = 1|\pi_i^{SPI} \right| \cdot 1 \nonumber\\
&\quad + \left| \Pr [ a_i(t) = 0|\pi_i^{opt} ] - \Pr [ a_i(t) = 0|\pi_i^{SPI} \right| \cdot 1 \nonumber\\
&= \left| \Pr [ a_i(t) = 1|\pi_i^{opt} ] - \Pr [ a_i(t) = 1|\pi_i^{SPI} \right| \nonumber\\
&\qquad\qquad+ \left|  \Pr [ a_i(t) = 0|\pi_i^{opt} ] - \Pr [ a_i(t) = 0|\pi_i^{SPI}\right|.
\end{align}
Since the probabilities of all actions sum to one under any policy:
\begin{align}
    \Pr[ a_i(t) = 1|\pi_i^{opt} ] + \Pr[ a_i(t) = 0|\pi_i^{opt} ] = 1,\nonumber\\
    \Pr[ a_i(t) = 1|\pi_i^{SPI} ] + \Pr[ a_i(t) = 0|\pi_i^{SPI} ] = 1.
\end{align}
Hence we have
\begin{align}
   &\left|  \Pr [ a_i(t) = 1|\pi_i^{opt} ] - \Pr [ a_i(t) = 1|\pi_i^{SPI} \right| \nonumber\\
   &\qquad\qquad= \left|  \Pr [ a_i(t) = 0|\pi_i^{opt} ] - \Pr [ a_i(t) = 0|\pi_i^{SPI} \right|.
\end{align}
Substituting back to \eqref{eq:46} yields
\begin{align}
  D_{\text{TV}} &( d_i^{\text{Opt}}(t+1), d_i^{\text{SPI}}(t+1) )\nonumber\\
   &\leq 2 \left| \Pr[ a_i(t) = 1|\pi_i^{Opt} ] - \Pr[ a_i(t) = 1 |\pi_i^{SPI}] \right|.
\end{align}

\end{proof}

\subsubsection{Bounding $\left|\mathbb{E} [ R_i^{\text{Opt}}(t) ] - \mathbb{E} [ R_i^{\text{SPI}}(t) ] \right|$}

\begin{align}
   & \left| \mathbb{E} [ R_i^{\text{Opt}}(t) ] - \mathbb{E} [ R_i^{\text{SPI}}(t) ] \right| \leq 2r_{\max} \cdot D_{\text{TV}}(d_i^{\text{Opt}}(t), d_i^{\text{SPI}}(t))\nonumber\\
    &\qquad+  r_{\max} \cdot \left| \Pr [ a_i(t) = 1 |\pi_i^{Opt}] - \Pr [ a_i(t) = 1 |\pi_i^{SPI}] \right|\nonumber\\
    &\leq r_{\max} \cdot \left| \Pr [ a_i(t) = 1 |\pi_i^{Opt}] - \Pr [ a_i(t) = 1 |\pi_i^{SPI}] \right|\nonumber\\
    &-r_{\max} \cdot \left| \Pr [ a_i(t-1) = 1 |\pi_i^{Opt}] - \Pr [ a_i(t-1) = 1 |\pi_i^{SPI}] \right|\nonumber\\
    &+5r_{\max} \cdot \left| \Pr [ a_i(t-1) = 1 |\pi_i^{Opt}] - \Pr [ a_i(t-1) = 1 |\pi_i^{SPI}] \right|,
\end{align}
where the inequality comes from Lemma \ref{lemma:4}.
Therefore, we have
\begin{align}
    &\mathbb{E}[\Delta V]\leq \sum_{t=1}^T\sum_{i=1}^N\sum_{j=1}^\rho\left| \mathbb{E} [ R_{i,j}^{\text{Opt}}(t) ] - \mathbb{E} [ R_{i,j}^{\text{SPI}}(t) ] \right|\nonumber\\
    &\leq \sum_{t=1}^T\sum_{i=1}^N\sum_{j=1}^\rho 5r_{\max} \cdot \left| \Pr [ a_{i,j}(t) = 1 |\pi_i^{Opt}] - \Pr [ a_{i,j}(t) = 1 |\pi_i^{SPI}]\right|.
\end{align}
Next, we introduce the Hoeffding inequality \cite{hoeffding1994probability}.
\begin{lemma}[Hoeffding inequality\cite{hoeffding1994probability}]\label{lemma:hoeffding}

Let \( Y_1, Y_2, \dots, Y_n \) be independent random variables such that \( Y_i \in [a_i, b_i] \). Let \( S_n = \sum_{i=1}^{n} Y_i \) be the sum of these random variables. Then for any \( t > 0 \), Hoeffding's inequality states:

\begin{align*}
   Pr\left( |S_n - \mathbb{E}[S_n]| \geq \epsilon \right) \leq 2 \exp\left( -\frac{2\epsilon^2}{\sum_{i=1}^{n} (b_i - a_i)^2} \right).
\end{align*}

\end{lemma}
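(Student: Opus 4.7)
The plan is to prove Hoeffding's inequality by the classical Chernoff-bound method, coupled with an auxiliary moment-generating-function bound for a bounded centered random variable. First I would recenter: define $Z_i := Y_i - \mathbb{E}[Y_i]$, so that $Z_i$ is zero-mean and supported on an interval of width $b_i - a_i$. The two-sided inequality in the lemma will follow from the one-sided bound on $\Pr(\sum_i Z_i \geq \epsilon)$ applied to both $Z_i$ and $-Z_i$, and then combined by a union bound, which produces the factor of $2$ in the statement.

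For the one-sided bound, the standard Chernoff trick is, for any $\lambda > 0$,
$$
\Pr\!\Big(\sum_{i=1}^n Z_i \geq \epsilon\Big) \;\leq\; e^{-\lambda \epsilon}\,\mathbb{E}\!\Big[e^{\lambda \sum_i Z_i}\Big] \;=\; e^{-\lambda \epsilon}\prod_{i=1}^n \mathbb{E}\!\big[e^{\lambda Z_i}\big],
$$
where Markov's inequality is applied to the positive random variable $e^{\lambda \sum_i Z_i}$ and the factorization uses the independence of the $Y_i$. The next step is the auxiliary Hoeffding lemma: for any zero-mean random variable $Z$ supported on $[\alpha, \beta]$, one has $\mathbb{E}[e^{\lambda Z}] \leq \exp\!\big(\lambda^2 (\beta - \alpha)^2 / 8\big)$. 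I would prove this by setting $\psi(\lambda) := \log \mathbb{E}[e^{\lambda Z}]$ and checking $\psi(0) = 0$, $\psi'(0) = \mathbb{E}[Z] = 0$, and $\psi''(\lambda) = \mathrm{Var}_{\tilde{\mathbb{P}}}(Z) \leq (\beta-\alpha)^2/4$, where $\tilde{\mathbb{P}}$ is the tilted measure with density $e^{\lambda z}/\mathbb{E}[e^{\lambda Z}]$, and where the variance bound uses Popoviciu's inequality (a bounded random variable on $[\alpha,\beta]$ has variance at most $(\beta-\alpha)^2/4$). Taylor expansion then gives $\psi(\lambda) \leq \lambda^2(\beta-\alpha)^2/8$.

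Applying this to each $Z_i$ with $[\alpha, \beta] = [a_i - \mathbb{E}[Y_i], b_i - \mathbb{E}[Y_i]]$, an interval of width $b_i-a_i$, yields
$$
\Pr\!\Big(\sum_i Z_i \geq \epsilon\Big) \;\leq\; \exp\!\Big(-\lambda \epsilon + \tfrac{\lambda^2}{8}\sum_i (b_i - a_i)^2\Big).
$$
This is a quadratic in $\lambda$; minimizing over $\lambda > 0$ at $\lambda^{\star} = 4\epsilon / \sum_i (b_i - a_i)^2$ gives the exponent $-2\epsilon^2/\sum_i (b_i-a_i)^2$. Symmetrizing by repeating the argument with $-Z_i$ and union-bounding the two one-sided tails delivers the claimed two-sided bound with the prefactor $2$.

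The main and really only delicate step is the auxiliary Hoeffding lemma, specifically getting the sharp constant $1/8$; naïve convexity-of-exponential arguments give only $1/2$. The sharp constant requires the variance-bound reformulation $\psi''(\lambda) \leq (\beta-\alpha)^2/4$ via Popoviciu's inequality on the tilted measure. Everything else—recentering, the Chernoff step, factorization by independence, optimization over $\lambda$, and the symmetrization—is mechanical.
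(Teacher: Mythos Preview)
Your proof is correct and is exactly the standard Chernoff--Hoeffding argument. Note, however, that the paper does not prove this lemma at all: it is simply quoted from \cite{hoeffding1994probability} as a known result and used as a black box in the proof of Theorem~\ref{thm:non-asymptotic}. So there is no ``paper's own proof'' to compare against here; your derivation is the textbook one and would serve perfectly well if a self-contained argument were desired.
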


Define $Y_j:=\left| \Pr [ a_{i,j}(t) = 1 |\pi_i^{Opt}] - \Pr [ a_{i,j}(t) = 1 |\pi_i^{SPI}]\right|$ and $S_\rho:=\sum_{j=1}^\rho Y_j$. Then according to Lemma \ref{lemma:hoeffding}, we have
\begin{align*}
    Pr\left( |S_\rho - \mathbb{E}[S_\rho]| \geq \epsilon \right) \leq 2 \exp\left( -\frac{2\epsilon^2}{\rho} \right),
\end{align*}
as $Y_j\in[0,1]$.
Therefore, with probability at least $1-\frac{1}{\rho}$ that
\begin{align}
\mathbb{E}[\Delta V]\leq 5r_{\text{max}}NT\sqrt{\frac{\ln 2\rho}{2\rho}}
\end{align}

\subsubsection{Total bound} Hence, we can bound $\sum_{i=1}^N X_i$ as
\begin{align}
    \sum_{i=1}^N X_i&\leq \mathbb{E}[\Delta V]+\sqrt{\frac{N\rho c^2\ln 2\rho}{2}}\nonumber\\
    &\leq 5r_{\text{max}}NT\sqrt{\frac{\ln 2\rho}{2\rho}}+\sqrt{\frac{N\rho c^2\ln 2\rho}{2}}\nonumber\\
    &\leq 5r_{\text{max}}NT\sqrt{\frac{\ln 2\rho}{2\rho}}+2r_{\text{max}}T\sqrt{\frac{N\rho \ln 2\rho}{2}}.
\end{align}

Therefore, we have the final result as
\begin{align}
    &\frac{1}{\rho N}\Big(J^{\pi^{Opt}}(\rho B, \rho N)- J^{\pi^{SPI}}(\rho B, \rho N)\Big)\nonumber\\
    &\leq 5r_{\text{max}}T\sqrt{\frac{\ln 2\rho}{2\rho^3}}+2r_{\text{max}}T\sqrt{\frac{ \ln 2\rho}{2N\rho}}.
\end{align}

\section{Experiment Details and Additional Results}

\subsection{Domain Details}\label{sec:C1}
In this subsection, we provide the details of the three domains considered in Section \ref{Sec:exp}.

\subsubsection{Continuous Positive Airway Pressure Therapy (CPAP)}
In the CPAP model with $S$ states from $1$ to $S$, if no intervention is taken (i.e., action  $a = 0$ ), the patient has a probability of 1 to move from a higher adherence level $s$ to a lower adherence level $s-1$. When intervention is applied, i.e., $a=1$, the patient may either move to a lower adherence level $s-1$ or higher adherence level $s+1$ with certain probabilities.
The state transition probabilities can be written as:

\begin{itemize}
    \item \textbf{Action 0}:
    $
    P_n(s, a=0, s-1) = 1, \forall n, s.
    $
    \item \textbf{Action 1}:
    $
    P(s, a=1, s-1) +
    P(s, a=1, s+1) = 1.
    $
\end{itemize}
For the above transitions, if $s$ is the boundary state $0$, taking action $0$ will make the arm stay at state $0$. If $s$ is the boundary state $S$, taking action $1$ will lead to $
    P(S, a=1, S-1) +
    P(S, a=1, S) = 1.
    $
For reward functions, a higher state provides a higher reward if the patient receives an intervention. We assume that the transition kernels and reward functions are heterogeneous across different types of patients.

\subsubsection{Mobile Healthcare for Maternal Health (MHMH)}

Consider two families of arms corresponding to
the reliable and greedy types of patients, each family has N types of arms. Also, there are three states per arm (start $s_s$, engaged $s_e$, and dropout $s_d$), and two actions (call$ 1$ and no call $0$).
Here are the details of transition matrix: all arms start in the start state $s_s$. Then with some fixed probability $\eta_{g,s},\eta_{r,s}$, it evolves to engaged state or dropout state. For greedy arms in engaged state, if will evolve to dropout state whatever the action is. For reliable arms in engaged state, it will stay engaged for sure if action is taken, and stay engaged for some fixed probability $\eta_{r,e}$ less than 1 if action isn't taken. For all arms in dropout state, it will either stay in dropout state or evolve to start state following some fixed probability $\eta_{g,d},\eta_{r,d}$. Table \ref{tab:formula_table} shows all transition probabilities in general.
We also consider the reward collected by action, which means we only get reward from one arm if we pull it in this round. For greedy arms, the reward in engaged state is $1$. For reliable arms, the reward in engaged state is some constant $C$ less than $1$. The rewards for other states are all $0$. The total budget is half of number of arms.

\begin{table}[t]
    \centering
    \begin{tabular}{|c|c|c|}
    \hline
    & \textbf{Greedy Arms} & \textbf{Reliable Arms} \\
    \hline
    \multirow{3}{*}{\textbf{Active}} &
    $P^{1}_{s_s,s_e}=1$ & $P^{1}_{s_s,s_e}=1$ \\
    & $P^{1}_{s_e,s_d}=1$ & $P^{1}_{s_e,s_e}=1$ \\
    & $P^{1}_{s_d,s_s}=\eta_{g,d}=1-P^{1}_{s_d,s_d}$ & $P^{1}_{s_d,s_s}=\eta_{r,d}=1-P^{1}_{s_d,s_d}$ \\
    \hline
    \multirow{3}{*}{\textbf{Passive}} &
    $P^{0}_{s_s,s_e}=\eta_{g,s}=1-P^{0}_{s_s,s_d}$ & $P^{0}_{s_s,s_e}=\eta_{r,s}=1-P^{0}_{s_s,s_d}$ \\
    & $P^{0}_{s_e,s_d}=1$ & $P^{0}_{s_e,s_e}=\eta_{r,e}=1-P^{0}_{s_e,s_d}$ \\
    & $P^{0}_{s_d,s_s}=\eta_{g,d}=1-P^{0}_{s_d,s_d}$ & $P^{0}_{s_d,s_s}=\eta_{r,d}=1-P^{0}_{s_d,s_d}$ \\
    \hline
    \end{tabular}
    \caption{Transition Probabilities in Mobile Healthcare for Maternal Health Domain. Other transition probabilities not shown in the graph are all $0$.}
    \label{tab:formula_table}
\end{table}

\subsubsection{Enhrenfest Project}
Consider $N$ Enhrenfest projects with state spaces represented as $s = 0, \cdots, S$. In the active phase, the reward rate for an arm is $c \cdot s$, and the reward is zero during the passive phase. Transition dynamics in the active phase allow movement from state $s$ to $s-1$ with rate $\mu \cdot s$, whereas in the passive phase, the transition is from $s$ to $s+1$ with rate $\lambda \cdot (S-s)$. This model simulates the operation and recovery of an arm: active phases yield rewards as the arm expends energy, while passive phases involve no reward generation as the arm regains its state. Whittle \cite{whittle1988restless} provides a closed form for the Whittle index of these projects:
$$v(s) = \frac{c}{\mu \cdot S} (\mu \cdot s^2-\lambda \cdot (S-s)^2)$$
We implement a discretization with a time step of $\Delta_t = 0.01$. For each arm $i$, the parameter $c_i$ is drawn uniformly at random from intervals $(1, 10)$, and $\mu_i$, and $\lambda_i$ are drawn uniformly at random from $(0,10)$. Considering that only active arms generate rewards, a no-pull policy is ineffectual in this context.

\begin{table*}[h]
    \centering
    \scalebox{0.9}{
    \begin{tabularx}{\textwidth}{l>{\centering\arraybackslash}X>{\centering\arraybackslash}X>{\centering\arraybackslash}X}
    \toprule
    \multirow{2}{*}{\textbf{Policy}} & \multicolumn{3}{c}{\textbf{Enhrenfest Project}}  \\
    \cmidrule(lr){2-4}
    & {\textbf{$(10, 10, 3, 3, 10)$}}  & {\textbf{$(30, 5, 20, 10, 6)$}}  & {\textbf{$(20, 10, 6, 3, 10)$}}\\
    \midrule
    Upper Bound& \colorbox{white}{21.3} & \colorbox{white}{42.1} & \colorbox{white}{41.1} \\
    SPI & \colorbox{green}{$20.4\pm 0.5$} & \colorbox{green}{$41.3\pm 0.5$} & \colorbox{green}{$40.3\pm 0.9$} \\
    Mean Field & \colorbox{white}{$12.8\pm 0.3$} & \colorbox{white}{$27.6\pm 0.3$} & \colorbox{white}{$19.8\pm 0.4$} \\
    Finite Whittle & \colorbox{green}{$20.8\pm 0.3$} & \colorbox{green}{$41.8\pm 0.3$} & \colorbox{green}{$40.2\pm 0.6$} \\
    Infinite Whittle & \colorbox{green}{$21.0\pm 0.3$} & \colorbox{green}{$41.7\pm 0.3$} & \colorbox{green}{$40.5\pm 0.5$} \\
    Original Whittle & \colorbox{white}{$11.5\pm 0.2$} & \colorbox{white}{$28.3\pm 0.2$} & \colorbox{white}{$17.7\pm 0.3$} \\
    Q-Difference & \colorbox{green}{$20.9\pm 0.3$} & \colorbox{green}{$41.8\pm 0.3$} & \colorbox{green}{$40.7\pm 0.6$} \\
    Random & \colorbox{white}{$11.0\pm 0.2$} & \colorbox{white}{$14.9\pm 0.2$} & \colorbox{white}{$17.0\pm 0.3$} \\
    \bottomrule
    \end{tabularx}
    }
    \caption{We present the performance of all policies under \emph{Enhrenfest project}. We run each setting for $100$ simulations and present $95\%$ confidence interval. Each setting is denoted by the parameters (number of types $N$, number of states $S$, budget $K$, group size $\rho$, time horizon $T$). Optimal policies are highlighted in green.}
    \label{tab:enhrenfest}
    \vspace{-0.1in}
\end{table*}

\begin{figure*}[h]
\centering
\includegraphics[width=0.99\textwidth]{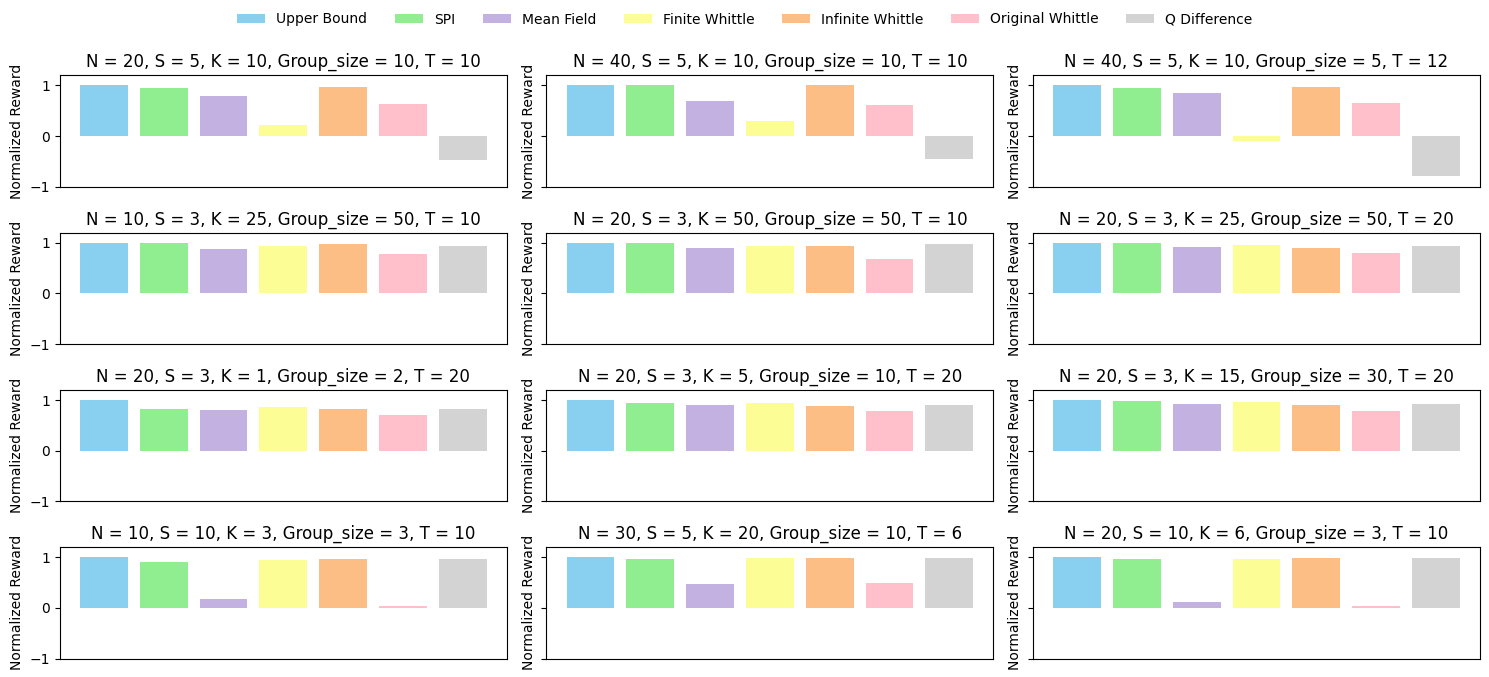}
\caption{We present the chart corresponding to Table \ref{tab:formula_table}, where the performance of all policies is normalized between the optimal upper bound and the random policy. Specifically, the normalized optimal upper bound is set to 1, while the performance of the random policy is set to 0, providing a clear comparison of the relative performance across all policies.}
\label{fig:table1}
\vspace{-0.1in}
\end{figure*}

\subsection{Additional Results}\label{sec:C2}

We show results of CPAP and MHMH domains in Table \ref{tab:formula_table} and Enhrenfest Project in Table \ref{tab:enhrenfest}. Figure \ref{fig:table1} is the chart corresponding to Table \ref{tab:formula_table} and \ref{tab:enhrenfest}, where the performance of all policies is normalized between the
optimal upper bound and the random policy. Specifically, the normalized optimal upper bound is set to 1, while the performance
of the random policy is set to 0, providing a clear comparison of the relative performance across all policies.

\subsubsection{Averaging over different MDP Instances}
In each setting of Table \ref{tab:formula_table}, we randomly assign a set of transition probabilities for each type of arm. To evaluate the robustness of our results across different MDP instances, we use random seeds from $0$ to $19$ to generate 20 MDP instances in one CPAP setting, running 10 simulations for each instance. The results are displayed in Figure \ref{fig: robustness}. As shown, the \textit{SPI} policy and the infinite Whittle index policy perform as the optimal ones, indicating that our method is robust across different MDP instances.

\begin{figure*}[h]
\centering
\includegraphics[width=0.6\textwidth]{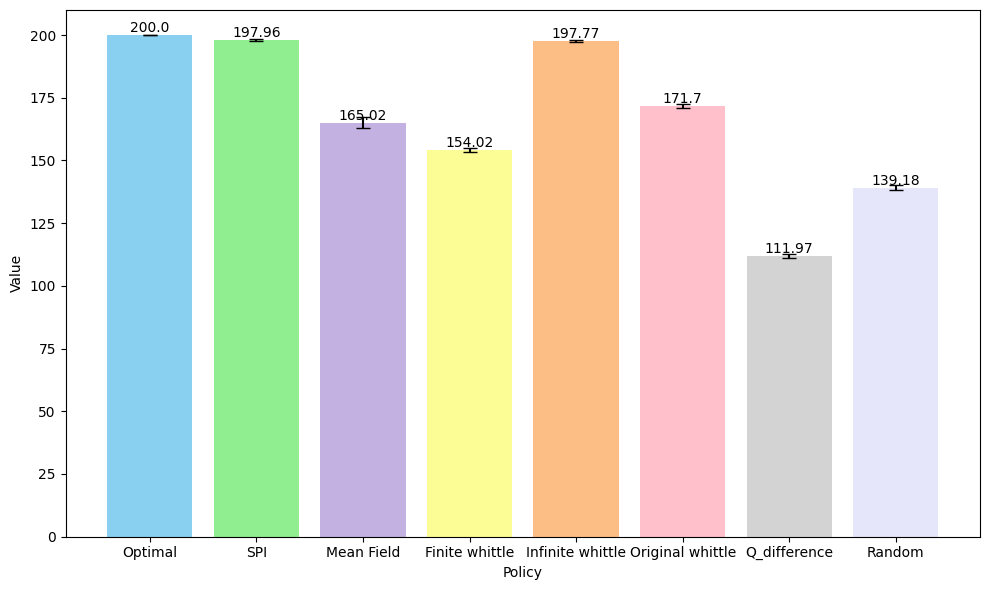}
\vspace{-0.1in}
\caption{Here we consider the setting $(N,S,K,\rho,T)= (20,5,10,10,10)$ in the Birth-Death Process(CPAP) domain. We pick $20$ different random seeds ($0$-$19$) and run $10$ simulations for MDP generated by each random seed. Here the chart shows the average performance and $95\%$ confidence interval of each policy. }
\label{fig: robustness}
\vspace{-0.13in}
\end{figure*}

\subsubsection{Robustness Check for Running Time}
\begin{figure*}[h]
\centering
\includegraphics[width=0.5\textwidth]{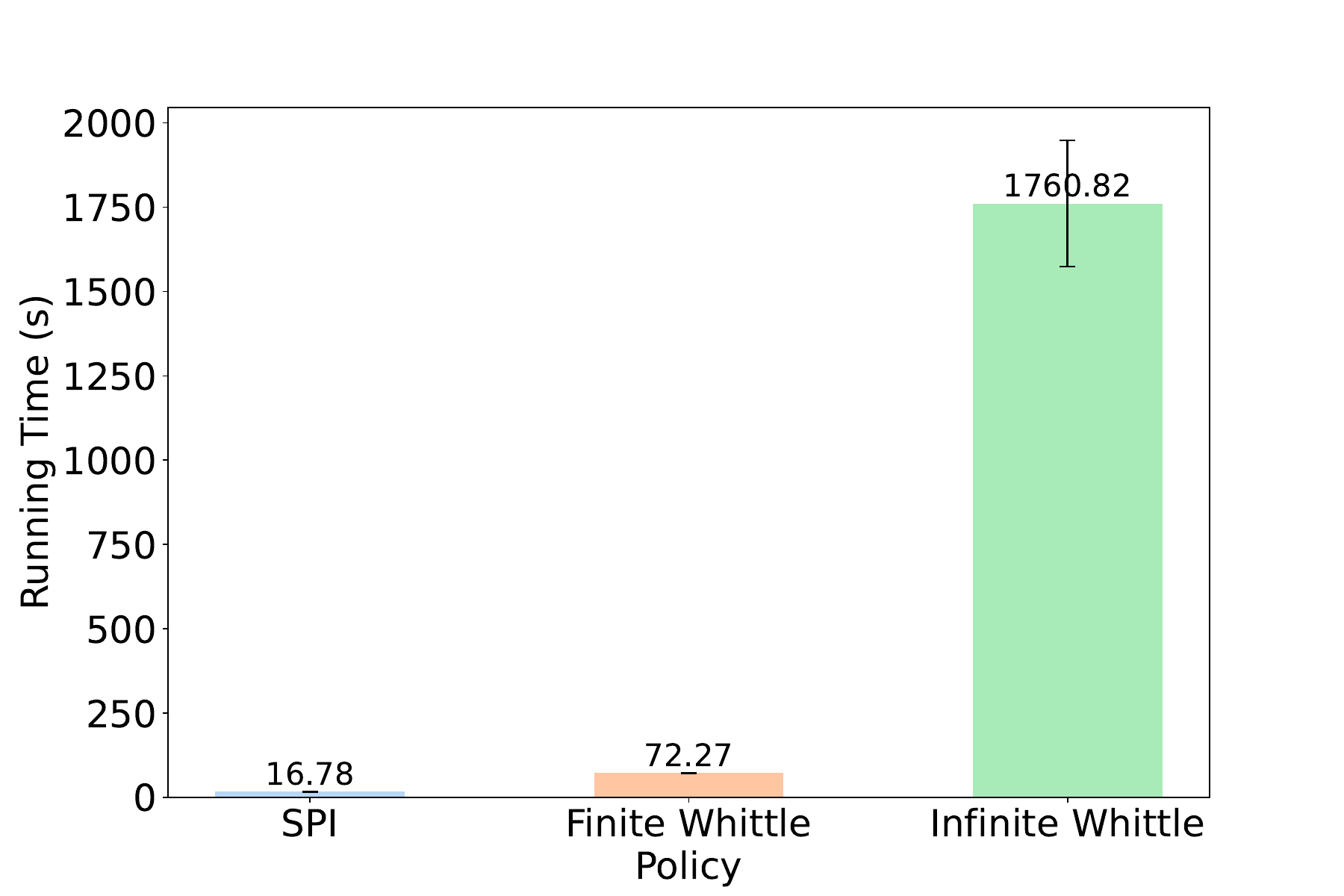}
\vspace{-0.1in}
\caption{We present the average running time of \texttt{SPI} policy, finite whittle policy, and infinite whittle policy in three different randomly generated MDPs with $(N,S,K,\rho,T) = (10, 10, 50, 50, 10)$.}
\label{fig:Random_run_time}
\vspace{-0.13in}
\end{figure*}

We randomly generate three MDP instances, and we take the average running time of each policy under different MDP instances. Figure \ref{fig:Random_run_time} serves as a robustness check, affirming our finding that the SPI index generally exhibits significantly lower running times than those observed with Whittle-index-based policies.

\subsubsection{Asymptotic Optimality}\label{subsubsec:asymptotic_optimality}
\begin{figure*}[htbp]
\centering
\includegraphics[width=0.5\textwidth]{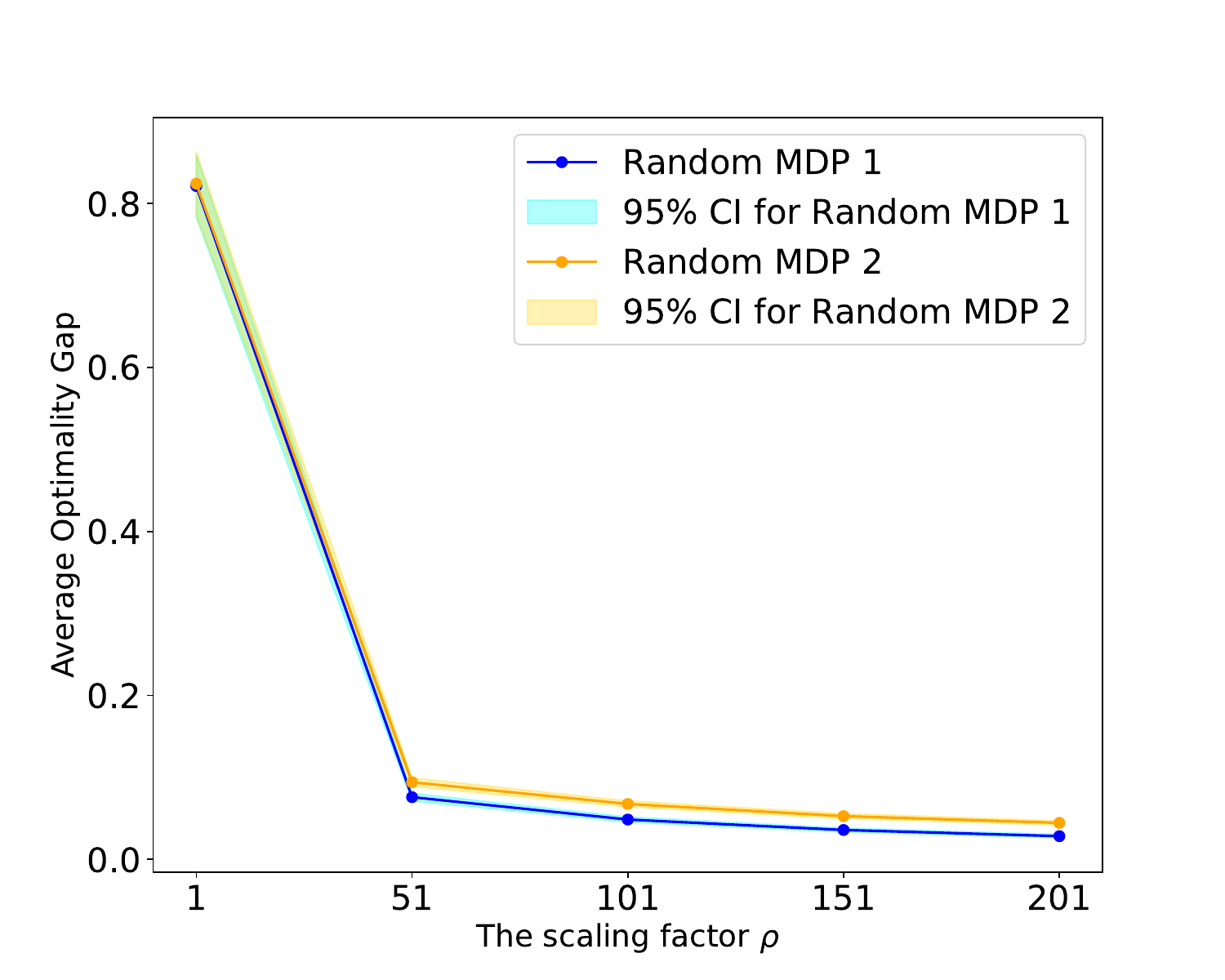}
\vspace{-0.2in}
\caption{We present the asymptotic optimality of the proposed \texttt{SPI} policy with $(N,S,K,\rho,T) = (20, 10, 30, 10, 6)$. }
\label{fig:asymptotic_optimality}
\end{figure*}
As indicated by Theorems \ref{thm:asym_opt} and \ref{thm:non-asymptotic}, the proposed \texttt{SPI} policy is asymptotically optimal as the number of total arms $\rho N$ goes large, where $\rho$ is the scaling factor of the system. At this end, we
 empirically demonstrate the asymptotic optimality of the \texttt{SPI} policy under two different MDP instances whose transition probabilities are randomly generated, as shown in Figure \ref{fig:asymptotic_optimality}.  We observe that as $\rho$ increases, the average optimality gap decreases,  converging towards zero. This result illustrates that as the system scales, the \texttt{SPI} policy becomes increasingly optimal, approaching the theoretical reward upper bound. The figure also includes 95\% confidence intervals for both MDP instances, confirming the reliability of the observed trend.

\end{document}